\DeclareMathOperator*{\argmin}{arg\,min}
\theoremstyle{definition}
\newtheorem{theorem}{Theorem}
\theoremstyle{definition}
\newtheorem{definition}{Definition}
\theoremstyle{definition}
\theoremstyle{definition}
\theoremstyle{definition}
\theoremstyle{definition}
\newcommand{\cmark}{\ding{51}}%
\newcommand{\xmark}{\ding{55}}%
\begin{document}
\title{Decentralized Safety-Critical Control of Resilient DC Microgrids 
with Large-Signal Stability Guarantees}

\author{$^{1}$Muratkhan~Abdirash,~\IEEEmembership{Student Member,~IEEE,} and~$^{1}$Xiaofan~Cui,~\IEEEmembership{Member,~IEEE}
\thanks{$^{1}$M. Abdirash and X. Cui are with the Department
of Electrical and Computer Engineering, University of California, Los Angeles, 90095 USA email: {\tt\small mabdirash@ucla.edu, cuixf@seas.ucla.edu} (corresponding author). This work is based on work in part by the UCLA faculty start-up fund. This work is presented in part at the 2025 American Control Conference, Denver, CO, USA, July 8-10, 2025.}}
\maketitle
\begin{abstract}
The increasing penetration of distributed energy resources and power-electronics interfaces in DC microgrids, coupled with rising cyber threats, necessitates primary controllers that are provably safe, cyber-resilient, and practical. Conventional droop-based methods remain prevalent due to their simplicity, yet their design is largely empirical and conservative, lacking rigorous guarantees. Advanced strategies improve certain aspects, but often sacrifice scalability, robustness, or formal safety. In this work, we propose a Distributed Safety-Critical Controller (DSCC) that systematically integrates global stabilization with formal safety guarantees in a fully decentralized manner. Leveraging control barrier functions and the port-Hamiltonian system theory, the DSCC achieves scalable safe stabilization while preserving real-time implementability. High-fidelity switched-circuit simulations validate the controller’s advantages under various contingencies. This framework paves the way for resilient, safety-critical, and scalable control in next-generation DC microgrids.
\end{abstract}
\begin{IEEEkeywords}
DC/DC converter, DC Microgrid, Cyber Security and Safety, Online Optimal Control, Large-Signal Stability 
\end{IEEEkeywords}
\begin{figure}[ht]
\centering
\includegraphics[width=0.9\linewidth]{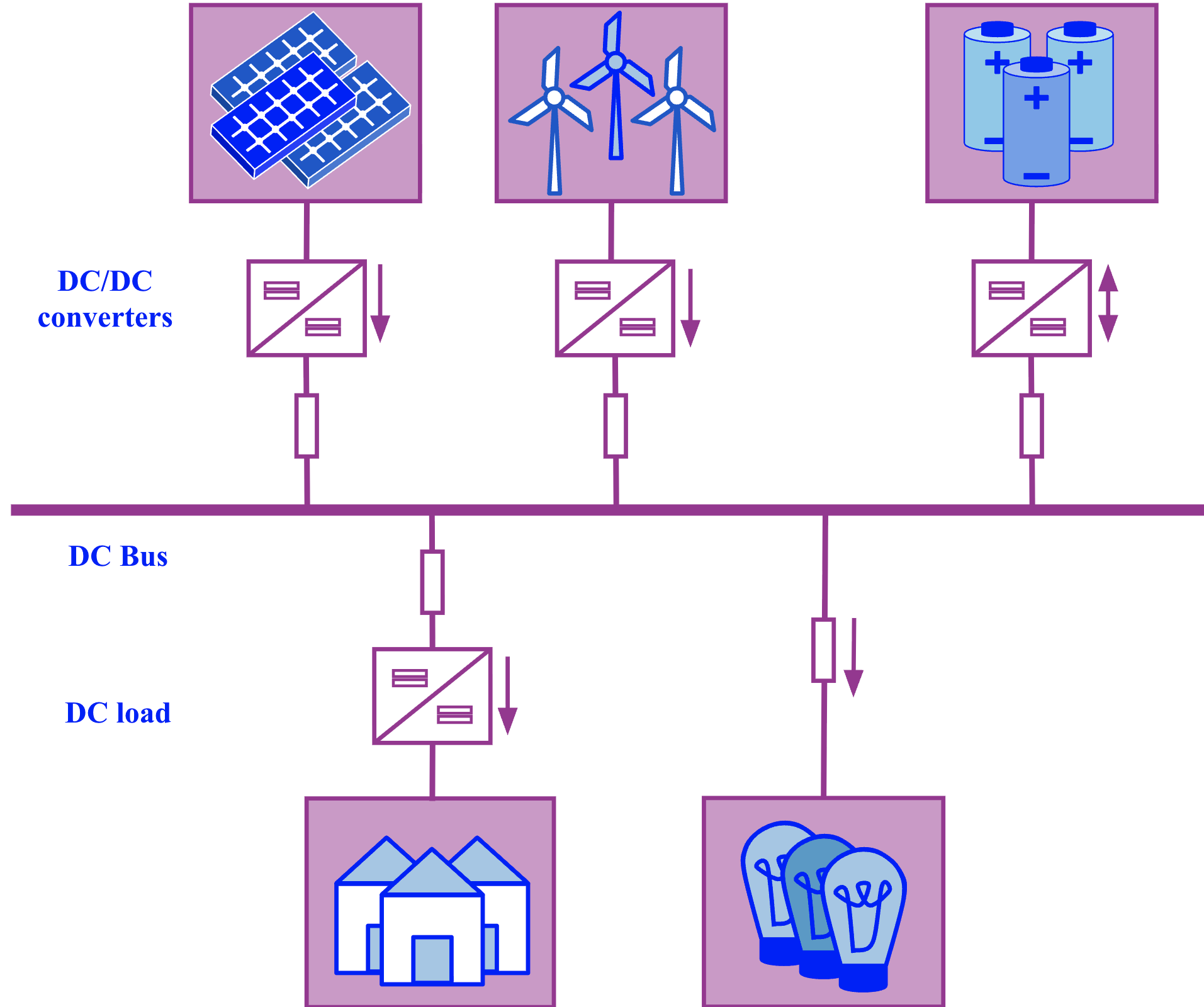} \caption{Illustrative example of a single-bus DC microgrid. It delivers the energy generated by DC power sources such as solar panels and batteries to DC loads such as electronic devices through one bus.}\label{fig:high-level}
\end{figure} 

\section*{Nomenclature}
\setlength{\tabcolsep}{4pt}
\begin{table}[ht]\centering
\caption*{}\small\begin{tabular}{cc}
\hline
$v_j$ &Output voltage of $j$-th converter [V]\\

$i_{s_j}$ &Current supplied to $j$-th DER-interfacing converter [A]\\

$i_{t_j}$ &Current through $j$-th transmission line [A]\\

$v_b$ &Bus voltage of the microgrid [V]\\

$i_f$ &Current through the filter inductor at the load [A]\\

$v_l$ &Voltage supplied to DC load [V]\\
\hline
$C_j$ &Output capacitance of $j$-th converter [mF]\\

$R_j$ &Resistance of $j$-th transmission line [m$\Omega$]\\

$L_j$ &Inductance of $j$-th transmission line [mH]\\

$L_f$ &Filter inductance at the load-interfacing converter [mH]\\

$C_b$ & Bus capacitance [mF]\\

$C_l$ & Output capacitance of the load-interfacing converter [mF]\\

$R_l$ & Resistive load to the microrgrid [$\Omega$]\\

$r_l$ & DC load resistance [$\Omega$]\\
\hline
DER &Distributed energy resources\\

KCL &Kirchhoff's current law\\

KVL &Kirchhoff's voltage law\\

CPL &Constant power load\\

CLF &Control Lyapunov Function\\

CBF &Control Barrier Function\\

QP &Quadratic program\\
\hline
\end{tabular}\label{tab:nomenclatuer}\end{table}
\section{Introduction}
DC microgrids have emerged as a promising paradigm for next-generation energy systems due to their inherent compatibility with distributed energy resources (DER), efficient interface with modern electronic loads, and the potential for improved overall energy efficiency \cite{6263309,6774539, 5587899}. A popular example of a DC microgrid is given in Fig. \ref{fig:high-level}, which describes the power grid of an industrial park with DER. Despite these advantages, physical resilience in DC microgrids remains challenging. Key difficulties include ensuring stability and safety in the presence of nonlinear loads, low system inertia, and millisecond-scale rapid transients introduced by DC/DC power electronic converters \cite{shahgholian2021}.

With its increasing deployment in residential and commercial energy infrastructures, cyber-resilience has also become an escalating concern. In July 2024, the FBI warned that attackers could remotely compromise residential converters to reduce solar output or manipulate home batteries, and similarly target solar farm microgrids to cause widespread outages \cite{ribeiro2024}.
\begin{table*}
\centering\small
    \begin{tabular}
{p{0.11\linewidth}p{0.09\linewidth}p{0.06\linewidth}p{0.06\linewidth}p{0.06\linewidth}p{0.065\linewidth}p{0.06\linewidth}p{0.06\linewidth}p{0.06\linewidth}p{0.05\linewidth}}
     \hline Control method & Control mode & Small-signal stability & Large-signal stability & Safety \& stability trade-off & Local optimality of control input & Comp-lexity & Model dependency & Adapt-ability & Refer-ences \\ \hline
        \textbf{Droop}  & centralized & \centering\cmark & \centering\cmark & \centering\xmark & \centering\xmark 
       & low & high & medium & \cite{herrera2015stability}\\ 
       \textcolor{white}{Droop}  & decentralized& \centering\cmark & \centering\xmark & \centering\xmark & \centering\xmark 
       & low & medium & low &\cite{AlrajhiAlsiraji2021}\\\hline 
       \textbf{MPC}  & centralized & \centering\cmark & \centering\cmark & \centering\cmark & \centering\cmark 
       & high & high & medium &\cite{xu2019offset}\\ 
       \textcolor{white}{MPC}  & decentralized & \centering\cmark & \centering\cmark & \centering\cmark & \centering\cmark 
       & high & high & medium &\cite{zhang2022decentralized}\\ \hline
       \textbf{ADRC}  & centralized & \centering\cmark & \centering\cmark & \centering\xmark & \centering\xmark 
       & medium & high & high& \cite{8668820}\\ 
       \textcolor{white}{ADRC}  & decentralized  & \centering\cmark & \centering\cmark & \centering\xmark & \centering\xmark 
       & medium & medium & medium& \cite{li2024decentralized}\\\hline 
       \textbf{Reinforcement learning} & centralized& \centering\cmark & \centering\cmark & \centering\cmark & \centering\cmark 
       & high & medium & medium & \cite{9817114}\\
       \textcolor{white}{n} & decentralized  & \centering\xmark & \centering\xmark & \centering\cmark & \centering\cmark 
       & high & low & medium & \cite{10018476}\\ \hline
       \textbf{Safety-critical}& centralized & \centering\cmark & \centering\cmark & \centering\cmark & \centering\cmark 
       & high & high & medium & \cite{abdirash2025nonlinearoptimalcontroldc}\\ \hline
    \end{tabular}
    \caption{State of the Art Stabilizing Controllers for a DC Microgrid}
    \label{controllers}
\end{table*}
Cyber attacks on DC microgrids can manifest across multiple layers of the system, including the communication, control, and hardware layers \cite{ahn2024}. At the communication layer, denial-of-service (DoS) attacks on the microgrid control center can disrupt coordination and lead to extended blackouts \cite{chen2022a}. Malware, such as worms, can also propagate rapidly across communication links, transforming isolated point failures into system-wide outages \cite{zhong2015}. These vulnerabilities have motivated research on fully decentralized operation schemes.

Significant efforts have been devoted to the cyber-resilience of decentralized secondary control in microgrids \cite{rajabinezhad2025, zuo2022}. However, comparatively less attention has been paid to the cyber-resilience of primary control, which ensures critical functions such as system stability and safety. Primary control must operate within a short time scale (on the order of milliseconds), making the design of computationally efficient and provably safe algorithms particularly challenging. 

In the control layer, power converters, as the backbone of the DC microgrid, are especially vulnerable to false data injection (FDI) attacks, such as setpoint spoofing \cite{zhang2023} and sensor data tampering \cite{roiggreidanus2024}. Beyond continuous false signal injection, recent studies reveal that attackers can employ stealthier strategies by intermittently injecting impulsive large-signal disturbances \cite{liu2021a}. These attacks are able to evade summation-based anomaly detectors \cite{liu2023} while still triggering nonlinear instability modes; therefore, mitigation of this task relies on the intrinsic large-signal stabilization capability of the primary controllers.

Protection devices are another critical vulnerability, where malware can alter relay settings and cause trigger failures \cite{liu2017b}. Safety-aware primary controllers can serve as an additional protection layer, cross-checking traditional devices to enhance reliability and enable anomaly detection. 

Motivated by these challenges, there is an urgent need to design new resilient control architectures for power converters in DC microgrids. In particular, new control schemes should provide theoretically provable guarantees on the safety and stability of the microgrid without compromising practicality. The majority of existing control algorithms with their safety and stability guarantees are summarized in Table \ref{controllers}. Droop controllers remain the most widely used practical control scheme for large-scale DC microgrids. However, as linear controllers, their design is largely empirical and oriented toward worst-case scenarios, often resulting in overly conservative performance \cite{marimuthu2023review}. Nonlinear droop variants can improve transient performance, but still lack the rigor of theoretical guarantees. Recent advances in robust droop control provide large-signal stability guarantees \cite{herrera2015stability}; however, these guarantees are typically associated with a limited region of attraction (ROA). As a result, the practical controller design becomes complicated and may fail under large disturbances.
Active Disturbance Rejection Control (ADRC) is another notable example of a robust nonlinear controller, yet it lacks a straightforward mechanism to guarantee safety. Model Predictive Control (MPC) offers large-signal stabilization with direct enforcement of safety constraints, but the infeasibility problem of nonlinear MPC limits its practicality. Reinforcement learning–based control has recently emerged as an end-to-end path, although its high complexity presents significant challenges for implementation in primary control. In contrast, existing safety-critical controllers can enforce safety in computationally effective ways, but most are centralized, which limits scalability in practical microgrid applications.

 It is evident that DC microgrids have yet to have a control framework that elegantly balances safety and stability specifications. Without a comprehensive and intentional design of a controller, DC microgrids are more vulnerable to cyberphysical disturbances. In line with that, this work proposes a control approach that offers the following novel contributions
\begin{enumerate}\item Large signal stability enforced by a global control Lyapunov function obtained using Port-Hamiltonian system theory; \item Safety adherence enforced by control barrier functions, novel to power electronics community;
\item Fully decentralized optimal control scheme that combines contributions 1 and 2 and updates the controller in real time using a standard convex optimization solver.\end{enumerate}

Following this introduction, the technical discussion in Section II delves into the control-theoretic definitions of safety and stability for nonlinear systems. Section III models the dynamics of the DC microgrid. Section IV defines the optimal desired steady-state behavior. Section V solves for a nominal stabilizing controller that is robust to large signal disturbances such as DoS cyber attacks. Section VI proposes \textit{Decentralized Safety-Critical Controller} (DSCC), capable of avoiding hazardous safety thresholds of the circuitry while preserving large-signal stability. Section VII validates the DSCC on a high-fidelity power electronics simulation platform and compares its performance with the nominal controller. Finally, Section VIII concludes the paper.

\section{Important Notions and Tools from Control Theory}
\begin{definition}\label{def:control_aff} Consider a \textit{control-affine} system \begin{equation}
\dot{x}=f(x)+g(x)u,\label{eq:aff}
\end{equation}
where its state and control input are given by $x\in X\subseteq\mathbb{R}^n$ and $u\in U\subseteq\mathbb{R}^m$. A pair $(u^*,x^*)\in U\times X$ is an \textit{equilibrium} of system \eqref{eq:aff} if\begin{equation}f(x^*)+g(x^*)u^*=0_{n}.\label{eq:equil}
    \end{equation}
\end{definition}
\begin{definition}\label{def:lie_deri} Let the gradient operator $\partial(\cdot)$ be with respect to $x$ and in column vector form. A \textit{Lie derivative} is an operator acting on a function $V$ and is defined as $\mathcal{L}_fV:=\partial Vf(x)$. 
\end{definition}
\begin{definition}\label{def:lipschitz}Let $V:X\rightarrow\mathbb{R}$ be a real-valued function. $V$ is \textit{Lipschitz continuous} on the domain $X$, if there exists a constant $M>0$ such that \begin{equation}
        ||V(x)-V(y)||\leq M||x-y||,\forall x,y\in X,\label{eq:lipschitz}
    \end{equation}
    where $||\cdot||$ is the Euclidean norm.
\end{definition}
\begin{definition}\label{def:glo_sta} Let $k(x)$ be a Lipschitz continuous feedback control law for \eqref{eq:aff} with $k(x^*)=u^*$. The equilibrium $x^*$ of the closed-loop system \begin{equation}\dot{x}=f(x)+g(x)k(x),\label{eq:closedloop}\end{equation} is \textit{(locally) exponentially stable}, if there exist constants $M,\lambda>0$ such that \begin{equation*}
        ||x(t)-x^*||\leq M||x_0-x^*||e^{-\lambda t},
    \end{equation*} $\forall t>0$ and $\forall x_0\in(D\subset X)\text{ } X$ where $x(t)$ is the solution of \eqref{eq:closedloop} to an initial condition $x_0$. 
\end{definition}
\begin{definition}\label{def:CLF} Let $V:X\rightarrow\mathbb{R}_{\geq0}$ be continuously differentiable, $V(x)$ is a \textit{Control Lyapunov Function (CLF)} for \eqref{eq:aff}, if there exist $\alpha_1,\alpha_2>0$ and $\alpha_3>0$ such that \begin{subequations}\label{eq:clf}\begin{align}\alpha_1&||x-x^*||^2\leq V(x)\leq \alpha_2||x-x^*||^2,\forall x\in X;\\\mathcal{L}_gV(x)&=0\Rightarrow \mathcal{L}_fV(x)+\alpha_3||x-x^*||^2<0,\forall x\in X\setminus x^*.\end{align}\end{subequations}\end{definition}
\begin{definition}\label{def:ccp} A CLF $V$ for system \eqref{eq:aff} satisfies the \textit{continuous control property} (CCP) if for any $\epsilon>0$, there exists an open set $E\subseteq X$ such that for any $x\neq x^*\in E$, there exists a control input $u\in\mathbb{R}^m$ satisfying $||u-u^*||<\epsilon$ and \begin{equation}
        \dot{V}(x,u)<-\alpha_3||x-x^*||^2.\label{eq:ccp}
    \end{equation}
\end{definition}
\begin{definition}\label{def:safe set} Let $\mathcal{C}\subset X$ denote the safe region of \eqref{eq:aff}. The closed-loop system \eqref{eq:closedloop} is safe under $k(x)$, if $\mathcal{C}$ is rendered \textit{forward invariant} , i.e. if $x_0\in\mathcal{C}$ then $ x(t)\in\mathcal{C}, \forall t>0.$ 
\end{definition}
\begin{definition}\label{def:CBF}
    Let $\mathcal{C}$ be defined as the 0-superlevel set of continuously differentiable $h:X\rightarrow\mathbb{R}$ \begin{subequations}\label{eq:safe_set}\begin{align}
        \mathcal{C}&:=\{x\in X|h(x)\geq0\},\\
        \text{int}(\mathcal{C})&:=\{x\in X|h(x)>0\},\\
        \partial(\mathcal{C})&:=\{x\in X|h(x)=0\},\end{align}\end{subequations} where $\text{int}(\mathcal{C})$ is the interior of safe set $\mathcal{C}$ and $\partial(\mathcal{C})$ is the boundary. The reciprocal function $B:=1/h(x)$ is a \textit{Control Barrier Function (CBF)} for \eqref{eq:aff} if there exists $\beta>0$ such that
    \begin{equation}\mathcal{L}_gB(x)=0\Rightarrow \mathcal{L}_fB(x)-\frac{\beta}{B(x)}<0,\forall x\in\text{int}(\mathcal{C}).\label{eq:cbf_con}\end{equation}\end{definition}
\begin{definition}\label{def:PH} A \textit{Port-Hamiltonian} (PH) dynamical system is given by\begin{subequations}\label{eq:PH}
   \begin{align}
       \dot{x}&=(\mathcal{J}(x)-\mathcal{R}(x))\partial H(x)+g(x)u+g_z(x)z,\\
       y&=g_z(x)^T\partial H(x),\end{align}\end{subequations} where skew-symmetric $\mathcal{J}(x)^T=-\mathcal{J}(x)$ is called an interconnection matrix, and positive definite $\mathcal{R}(x)^T=\mathcal{R}(x)\geq0$ is called a dissipation matrix. The function $H$ is called a \textit{Hamiltonian} and is defined as $H(x)=x^TQx$ with positive-definite $Q^T=Q>0$. Let $z\in Z\subseteq\mathbb{R}^p$ denote the input that enters the system modulated by matrix $g_z(x)$, and $y\in Y\subseteq\mathbb{R}^p$ denote the output.\end{definition}\begin{definition}\label{def:global_pH} Consider $k$ interconnected PH subsystems (Def. \ref{def:PH} indexed by $j=1,\cdots,k$) such that all of their input and output ports become internal, i.e. $\sum_{j=1}^ky_j^Tz_j=0$. Then the \textit{Global Interconnected Port-Hamiltonian System (GIPH)} is given by \begin{subequations}\label{eq:global}\begin{align}
            x&:=\text{col}(x_1\cdots x_k),\\
            u&:=\text{col}(u_1\cdots u_k),\\
            \mathcal{R}(x)&:=\text{diag}\left (\mathcal{R}_1(x_1)\cdots\mathcal{R}_k(x_k)\right ),\\
            \mathcal{J}(x)&:=\text{diag}\left (\mathcal{J}_1(x_1)\cdots\mathcal{J}_k(x_k)\right ),\\g(x)&:=\text{diag}\left (g_1(x_1)\cdots g_k(x_k)\right ),\\H(x)&:=\sum_{j=1}^kH_j(x_j),\end{align}\end{subequations} where col($\cdot$) stacks the column vectors and diag($\cdot$) constructs a block-diagonal matrix. 
\end{definition}
\begin{definition}\label{QP_def}
   A \textit{quadratic program (QP)}  is a convex optimization problem with a quadratic cost function and linear constraints. It can be solved efficiently with any convex optimization tool.
\end{definition}
\begin{figure}[ht]
\centering\includegraphics[width=0.8\linewidth]{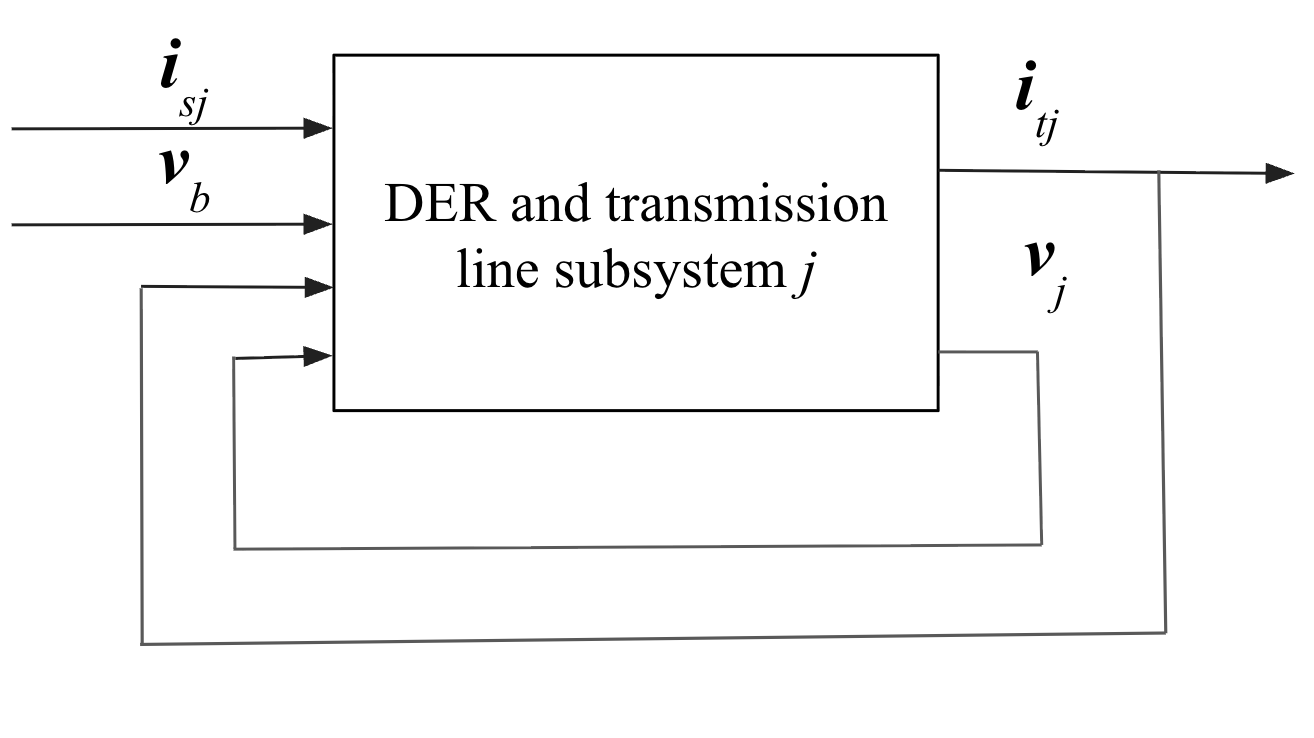}
    \caption{Input-output representation of DER and transmission line subsystems.}
    \label{fig:source subsystem}
\end{figure}
\begin{figure}[ht] \centering\includegraphics[width=0.8\linewidth]{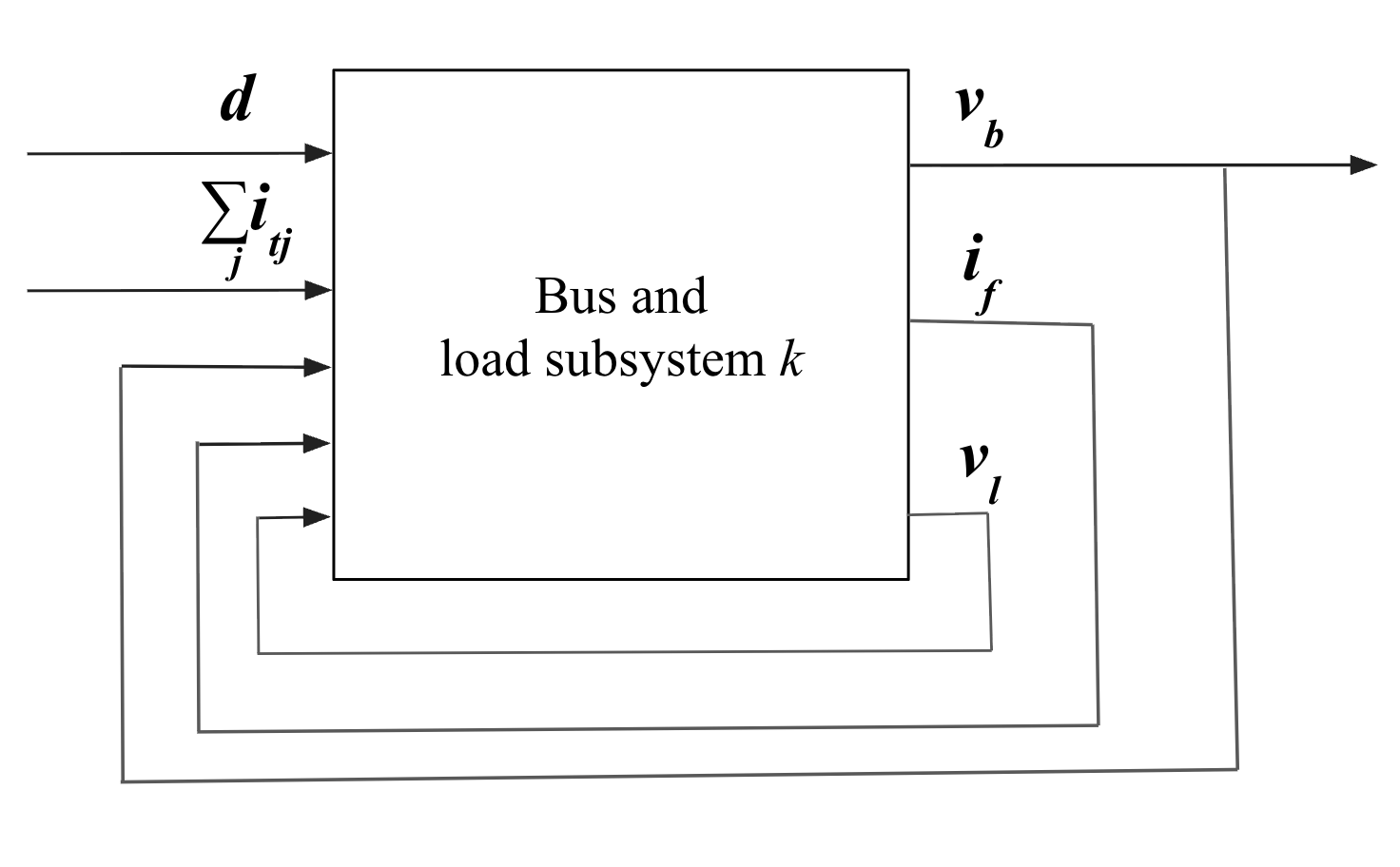}
    \caption{Input-output representation of the bus and load subsystem $k$}
    \label{fig:load subsystem}
\end{figure}
\begin{figure}[ht]
\centering
\includegraphics[width=0.48\textwidth]{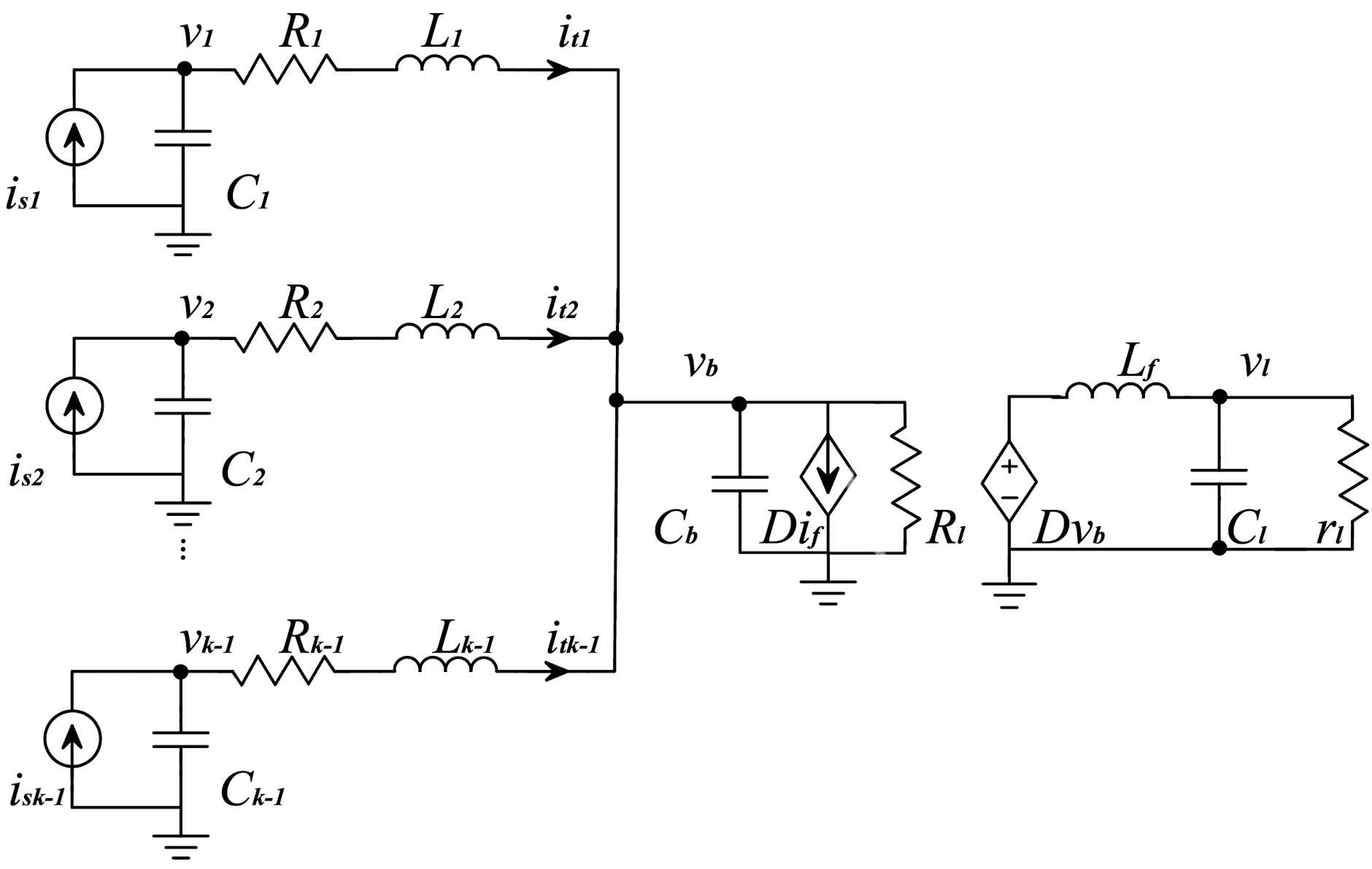}
\caption{Circuit schematics of a single-bus DC microgrid. A DER $j=1,\cdots,k-1$ is described by a current source $i_{s_j}$, its output capacitance $C_j$, and its output voltage $v_j$. A transmission line $j$ is given by resistance $R_j$, inductance $L_j$, and current through the line $i_{t_j}$. The bus is described by its voltage $v_b$ and capacitance $C_b$. The resistive load is $R_l$ and the DC load is given by its filter current $i_f$, load voltage $v_l$, and resistance $r_l$.}\label{fig:generic} 
\end{figure} 

\section{Modeling DC Microgrid Dynamics}
Consider a DC microgrid comprising $k-1$ DER that are connected to the grid through DC/DC converters with $k\geq1$. The following are assumed about the model \begin{enumerate}
    \item A DC microgrid is isolated from the utility grid and has a single bus that's radially connected to all DER;
    \item The transmission line impedance is time invariant.
\end{enumerate} 
To obtain a continuous-time dynamical model of the DC/DC converter, the true switched-circuit model is averaged over. On the power source side, a reduced model is employed to represent a DER and its
DC/DC converter that interconnects the DER with the microgrid. Specifically, the $j$-th DER and its converter are replaced by a current source and an output capacitance of the converter; see $i_{s_j}$ and $C_j$ in Fig. \ref{fig:generic}. A DC/DC converter in current mode control has an inner control loop that sets the inductor current to a reference value $i_{s_j}$ and an outer control loop that provides that value. The inner controller is on a much faster timescale than the outer controller which controls the averaged model. Hence, one can assume that $i_{s_j}$ is set instantaneously.

The load side of the microgrid is modeled as follows. All DC loads are aggregated into a pair of linear and nonlinear loads. The linear part is modeled by a resistor $R_l$, and the nonlinear part is modeled by a load-interfacing DC/DC converter. DC loads like consumer electronics are sensitive to voltage and power fluctuations. By controlling the turn ratio of the DC/DC converter, output voltage fluctuations are instantaneously suppressed.

The decentralized framework is more compatible with a model that utilizes all tunable turn ratios of the microgrid. Precisely, the DC/DC converter interfacing the load resistor can be modeled as a DC transformer with a turn ratio $0\leq d_l\leq1$ followed by a second-order low-pass filter. The filter inductor is $L_f$, the capacitor is $C_l$, and the load resistor is $r_l$. Lastly, the current through the filter inductor is $i_f$, and the filter's output voltage is $v_l$.

To fit the decentralized control framework, the microgrid is divided into $k$ interlinked subsystems. The dynamics of the $j$-th DER-interfacing converter and corresponding transmission line are combined into a subsystem $j$ with $j=1,\cdots,k-1$. Refer to Fig. \ref{fig:source subsystem} for the block-diagram representation of a source subsystem $j$. Here, the control input $i_{s_j}$ to the subsystem comes from a controlled current source, while the external input is the bus voltage $v_b$ that couples the subsystems with each other. The output of the subsystem is the current $i_{t_j}$ through the transmission line $j$. Applying KVL for loop $j$ in Fig. \ref{fig:generic} and KCL at node $v_j$, we arrive at
\begin{align}
     \dot{\begin{bmatrix}
    v_{j}\\
    i_{t_j}
\end{bmatrix}}&=\begin{bmatrix}
    -i_{t_j}/C_{j}\\
    (v_{j}-i_{t_j}R_{j}-v_b)/L_{j}
\end{bmatrix}+\begin{bmatrix} 1/{C_{j}}\\0
\end{bmatrix}i_{s_j}, 
    \label{eq:source}
\end{align}
where subsystems 1 through $k-1$ are coupled through the bus voltage $v_b$. 

The load subsystem $k$ combines the dynamics of the bus and the load-interfacing DC/DC converter. Refer to Fig. \ref{fig:load subsystem} for the block-diagram representation of a combined bus and load subsystem $j$. Here, the control input $d$ is the duty cycle of the DC/DC converter. The external coupling input is the sum of the transmission line currents $i_{t_j}$. The output of the subsystem is the bus voltage $v_b$. Applying KCL at nodes $v_b$ and $v_j$, and KVL for the loop $i_f$, the dynamics is found as 
\begin{align}
    \dot{\begin{bmatrix}
    v_b\\
    i_f\\
    v_l
\end{bmatrix}}=\begin{bmatrix}
    (i_{t_1}+\dots+i_{t_{k-1}}-v_b/R_l)/{C_b}\\
    -v_l/L_{f}\\
    (i_{f}-v_l/r_l)/C_f
\end{bmatrix}+\begin{bmatrix}
    -i_f/C_b\\
    v_b/L_f\\
    0
    \end{bmatrix}d_l.
    \label{eq:load_aff}
\end{align}

\section{Optimal Steady-State Behavior}\label{steady-state}
The natural steady-state values of DC microgrid dynamics are found by solving for the equilibrium of \eqref{eq:source} and \eqref{eq:load_aff}, but it results in an underdetermined system of equations. That leads to infinitely many steady-state values, parameterized by the steady-state current distribution, the bus voltage $v_b^*$ and the duty ratio $d_l^*$ for the load-interfacing converter. To find a unique and optimal set of steady-state values, a strictly convex optimal power flow problem is formulated. In particular, steady-state loss is minimized in the microgrid power network. The network loss is attributed to the steady-state power dissipated in the transmission lines due to their resistance $R_j$. The optimization problem is constrained with the steady-state dynamics of the bus \eqref{eq:load_aff}, and is formulated as follows
\begin{subequations} \label{eq:equil_constr}
    \begin{align} 
\min_{\{i_{t_j}^*\}_{j=1}^k}&\sum_{j=1}^k(i_{t_j}^*)^2R_{j}\\ 
        &\text{subject to }\sum_{j=1}^k i_{t_j}^*=\frac{v_b^*}{R_l}+\frac{d_l^*v_b^*}{r_l}.
    \end{align} 
\end{subequations} The solution to \eqref{eq:equil_constr} is given by \begin{equation}
     i_{t_j}^*=\frac{v_b^*/R_j}{\sum_{j=1}^n1/R_j}\left(\frac{1}{R_l}+\frac{d_l^*}{r_l}\right),\label{eq:solution_to10}\end{equation} for $j=1,\cdots,k.$ The optimal steady-state behavior given by \eqref{eq:solution_to10} ensures that the overall current supplied to the load is fairly distributed among the DER. The optimal current through the transmission is proportional to its conductance, which forces the DER to generate enough power to compensate for the dissipated loss. Hence, all DER effectively provide the same power to the load. The rest of the steady-state values are given by \begin{subequations}\label{eq:equil_all}\begin{align}v_j^*&=i_{t_j}^*R_j+v_b^*,\\
    i_{s_j}^*&=i_{t_j}^*,\\
    v_l^*&=d_l^*v_b^*,\\
 i_f^*&=v_l^*/r_l, \end{align}\end{subequations}
where $j=1,\cdots,k-1$.

\section{Decentralized Large Signal Stable Controller Design}\label{sec:stabilzing}\subsection{Port-Hamiltonian Representation of DC Microgrid}\label{subsec:GIP dynamics}
A nonlinear system that is port-Hamiltonian, see Def. \ref{def:PH}, comes with a natural Lyapunov function that certifies the large-signal stability of the system. Moreover, two or more PH systems can be easily interconnected through their external input and output ports, which is well suited to a decentralized control framework. Recall that the source dynamics in \eqref{eq:source} and the load dynamics in \eqref{eq:load_aff} are the natural interconnected subsystems of the microgrid. Then show that each subsystem is PH and rewrite the DC microgrid as a \textit{Global Interconnected Port-Hamiltonian (GIPH)} system described in Def. \ref{def:global_pH}.  

Consistent with Section III, there are $k$ subsystems comprising the DC microgrid. Subsystems 1 through $k-1$ describe the DER and its converter \eqref{eq:source}. The following PH subsystem is equivalent to \eqref{eq:source}
\begin{subequations}\label{eq:PH_source}
    \begin{align}
        x_j&:=\begin{bmatrix}
    v_j &i_{t_j}
\end{bmatrix}^T,\\
H_j(x_j)&:=(C_jv_j^2+L_ji_{t_j}^2)/2,\\
\Rightarrow\partial H_j(x_j)&:=\begin{bmatrix}
    C_jv_j&L_ji_{t_j}
\end{bmatrix}^T,\\
\mathcal{J}_j(x_j)&:=\begin{bmatrix}
    0 &-1/L_jC_j\\
    1/L_jC_j& 0
\end{bmatrix},\\
\mathcal{R}_j(x_j)&:=\begin{bmatrix}
    0 &0\\
    0&R_j/L_j^2
\end{bmatrix};\\
u_j&:=i_{s_j}\Rightarrow g_j(x_j):=\begin{bmatrix}
    1/C_j& 0
\end{bmatrix}^T;\\
z_j&:=-v_b\Rightarrow g_{z_j}(x_j):=\begin{bmatrix}
    0&1/L_j
\end{bmatrix}^T;\\y_j&:=i_{t_j},\end{align}\end{subequations} where $j=1,\cdots,k-1.$ 

The subsystem $k$ of the microgrid represents the DC bus and the DC load dynamics \eqref{eq:load_aff}.  Due to the difference between \eqref{eq:source} and \eqref{eq:load_aff}, the load side is rewritten as the following PH subsystem
\begin{subequations}\label{eq:PH_load}
    \begin{align}
        x_k&:=\begin{bmatrix}
    v_b &i_f&v_l 
\end{bmatrix}^T,\\
H_k(x_k)&:=(C_bv_b^2+L_fi_{f}^2+C_fv_l^2)/2,\\\Rightarrow\partial H_k(x_k)&:=\begin{bmatrix}
    C_bv_b&L_fi_{f}&C_fv_l
\end{bmatrix}^T,\\
\mathcal{J}_k(x_k)&:=\begin{bmatrix}
    0&0&0\\0 &0&-1/L_fC_f\\
    0&1/L_fC_f& 0
\end{bmatrix},\\
\mathcal{R}_k(x_k)&:=\begin{bmatrix}
    1/R_lC_b^2&0 &0\\0&0&0\\0&0&1/r_lC_f^2
\end{bmatrix};\\
u_k&:=d_l,\\\Rightarrow g_k(x_k)&:=\begin{bmatrix}
    -i_f/C_f &v_b/L_f&0
\end{bmatrix}^T;\\
z_k&:=-\begin{bmatrix}
    i_{t_1}&\cdots&i_{t_{k-1}}
\end{bmatrix}^T,\\\Rightarrow g_{z_k}(x_k)&:=-\begin{bmatrix}
    1/C_b&0&0\\1/C_b&0&0
\end{bmatrix}^T;\\y_k&:=-\begin{bmatrix}
    v_b&\cdots&v_b
\end{bmatrix}^T\in\mathbb{R}_{\leq0}^{k-1}.
    \end{align}
\end{subequations}

Lastly, note that \eqref{eq:PH_source} and \eqref{eq:PH_load} are coupled through their interconnected input and output ports. To verify that the PH subsystems make up a GIPH, check if there is any external port left 
\begin{align*}\sum_{j=1}^ky_j^Tz_j&=\sum_{j=1}^{k-1}-i_{t_j}v_b+v_bi_{t_j}=0.\end{align*}\subsection{Large-Signal Stable DC Microgrid}\label{subsec:stableGIPH}Recall the desired steady-state values for the terminal voltage $v_{j}^*$ of $j$-th DER-interfacing converter and the current $i_{t_j}^*$ through $j$-th transmission line from \eqref{eq:equil_all} and \eqref{eq:solution_to10}, respectively. To achieve an exponential decay to equilibrium $x_j^*$, see Def. \ref{def:global_pH}, the open-loop dissipation matrix $\mathcal{R}_k$ \eqref{eq:PH_source} lacks a dissipative term for the terminal voltage such as $-\alpha_jv_j$ with $\alpha_j>0$ in the first diagonal entry. With that in mind, set the closed-loop dynamics for each source PH subsystem $j$ to be a linear time-invariant (LTI) system with an exponentially stable equilibrium $x_j^*$ by defining
\begin{subequations}\label{eq:PH_source_desired}
    \begin{align}
        \hat{x}_j&:=x_j-x_{j}^*,\\
        \mathcal{J}_{j}^*(x_j)&:=\begin{bmatrix}
    0 &-1/L_jC_j\\
    1/L_jC_j& 0
\end{bmatrix},\\
\mathcal{R}_{j}^*(x_j)&:=\begin{bmatrix}
    \alpha_j/C_j^2 &0\\
    0&R_j/L_j^2
\end{bmatrix},\\
\hat{u}_j&:=u_j-u_j^*,\\
\hat{z}_j&:=z_j-z_j^*,\\
\hat{y}_j&:=y_j-y_j^*,\end{align}\end{subequations}
where $j=1,\cdots,k-1$. 
The closed-loop dissipation matrix is given by $\mathcal{R}^*_j$ and the interconnection matrix is $\mathcal{J}_j^*$. Lastly, $u_j^*,z_j^*$ and $y_j^*$ are desired steady-state values for control, external input, and output of the source PH subsystem.

Switching to the load PH subsystem defined in \eqref{eq:PH_load}, its desired closed-loop behavior should also be an LTI system with an exponentially stable $x_k^*$. Note that open-loop load dynamics lacks a dissipation term in $\mathcal{R}_k$ for the filter current $i_f$. To address this in closed-loop, replace 0 in the second diagonal entry of $-\mathcal{R}_k^*$ with$-\alpha_k<0$. The steady-state control input $-d_l^*\hat{i}_{tf}$ is added to the bus dynamics $\dot{v}_b$ and $d_l^*\hat{v}_b$ to $\dot{i}_f$. This results in addition of closed-loop entries $-d^*_l/C_bL_f$ and $d^*_l/C_bL_f$ to entries 2 by 2 and 3 by 3 of the interconnection matrix $\mathcal{J}_j^*$. Then the desired closed-loop dynamics is 
\begin{subequations}\label{eq:PH_load_desired}
    \begin{align}
        \hat{x}_k&:=x_k-x^*_k,\\
        \mathcal{J}_{k}^*(x_k)&:=\begin{bmatrix}
    0 &d_l^*/C_bL_f&0\\
    -d_l^*/C_bL_f& 0&-1/C_fL_f\\
    0&1/C_fL_f&0
\end{bmatrix},\\
\mathcal{R}_{k}^*(x_k)&:=\begin{bmatrix}
    1/R_lC_b^2 &0&0\\
    0&\alpha_k/L_f^2&0\\
    0&0&1/r_lC_f^2
\end{bmatrix},\\
\hat{d}_l&:=d_l-d_l^*.\end{align}\end{subequations}
\subsection{Nominal Stabilizing Controller}
Recently proposed Dynamic Interconnection and Damping Assignment Passivity-Based Control (IDA-PBC) approach \cite{yuan2025large} is followed to design the nominal stabilizing controller. Traditional IDA-PBC has two key limitations. First, it does not establish the port-passivity of each PH subsystem, which compromises the stability of GIPH. Second, the resulting controller is singular at the desired equilibrium $x_j^*$, which breaks Lipschitz continuity. The dynamic IDA-PBC solves for the control $u_j$ and the dynamic state $\tilde{x}_j$ in the following equation 
\begin{align}
    \left[\mathcal{J}^*_j-\mathcal{R}^*_j\right]&\partial H_j(\hat{x}_j)-g_{z_j}z^*_j+\dot{\tilde{x}}_j\nonumber\\&=\left[\mathcal{J}_j-\mathcal{R}_j\right]\partial H_j(x_j)+g_j(x_j)u_j,\label{eq:ida_pbc}
\end{align}
where $j=1,\cdots,k$. The dependence on $x_j$ is removed from the constant matrices $\mathcal{J}_j,\mathcal{R}_j,g_{z_j}$ and $\mathcal{J}_j^*,\mathcal{R}_j^*$. The resulting control law for the source subsystem $j$ is given by
\begin{equation}
    \hat{u}_j(\hat{x}_j)=-\alpha_j\hat{v}_j+e^{-R_jt/L_j}\hat{i}_{t_j}(0),\label{eq:u_j_hat}
\end{equation} where $j=1,\cdots,k-1$. The control law for the load subsystem $k$ is given by
\begin{equation}
    \hat{d}_l(\hat{x}_k)=-\begin{cases}
        \alpha_k\hat{i}_f/v_b, v_b>0; \\
        0, v_b=0.
    \end{cases}\label{eq:d_l_hat}
\end{equation}

\section{Decentralized Safety-Critical Controller (DSCC) Design}
The nominal stabilizing controller in Section \ref{sec:stabilzing} does not come with any safety guarantees. That is particularly concerning since the microgrid can fail completely due to hazardous currents and voltages in the power network circuitry. Unsafe conditions can be caused by large-signal disturbances, such as cyber attacks. This work targets the safety adherence of the microgrid and propose the Decentralized Safety-Critical Controller (DSCC). The DSCC is formulated as an optimal control problem that combines safety and stability design objectives to solve for a unique feedback controller. 
\subsection{Stability by Control Lyapunov Function}
Let the global Hamiltonian $H$ be a candidate Lyapunov function that certifies the global exponential stability of the desired equilibrium $x^*$ for the closed-loop DC microgrid dynamics. To see that $H$ is positive definite and differentiable, consider
\begin{subequations}
    \begin{align}
        &H(\hat{x})=0\iff x=x^*,\\&H(\hat{x})=\hat{x}^TQ\hat{x}/2,\\
        &Q=\text{diag}(C_1,L_1,\cdots,C_{k-1},L_{k-1}C_b,L_f,C_f)>0\\
        &\partial H(\hat{x})=Q\hat{x}.
    \end{align}\label{eq:H_CLF}
\end{subequations}
Under the stabilizing controller, the closed-loop PH subsystems are given by \eqref{eq:PH_source_desired} and \eqref{eq:PH_load_desired}. The derivative of each subsystem Hamiltonian $H_j$ along the closed-loop DC microgrid dynamics:
\begin{align}
     \dot{H}_j(\hat{x}_j)&=\partial H_j(\hat{x}_j)^T\left [\mathcal{J}^*_j-\mathcal{R}^*_j \right ]\partial H_j(\hat{x}_j)+\hat{y}_j^T\hat{z}_j,\nonumber\\
     &=-\partial H_j(\hat{x}_j)^T\mathcal{R}^*_j\partial H_j(\hat{x}_j)+\hat{y}_j^T\hat{z}_j.\label{eq:subHamiltonian}
\end{align} where $j=,\cdots,k$. By Def. \ref{def:global_pH}, the derivative of the global Hamiltonian is the sum of \eqref{eq:subHamiltonian}:
\begin{align}
    \dot{H}(\hat{x})=\sum_{j=1}^k\dot{H}_j(\hat{x}_j)&=-\sum_{j=1}^k\partial H_j(\hat{x}_j)^T\mathcal{R}^*_j\partial H_j(\hat{x}_j),\nonumber\\
    &=\partial H(\hat{x})^T\mathcal{R}
^*\partial H(\hat{x})<0,
    \label{eq:globHamiltonian}
\end{align}
and \eqref{eq:globHamiltonian} establishes the global exponential stability of $x^*$ since no restrictions were placed on the state-space of $\hat{x}$.
$H(\hat{x})$ is a candidate \textit{Control Lyapunov Function} (CLF) given by Def. \ref{def:CLF} for the DC microgrid dynamics. To show that it is indeed a CLF,  define
\begin{subequations}\label{eq:PH_CLF}
    \begin{align}&f_j(x_j):=(\mathcal{J}_j-\mathcal{R}_j)\partial H_j(x_j),\\&f(x):=\text{col}(f_1(x_1)\cdots f_k(x_k))\\
&\hat{\mathcal{R}}^*:=\mathcal{R}^*-\Lambda>0\\
     &\mathcal{L}_fH(\hat{x})+\mathcal{L}_gH(\hat{x})u\leq-\hat{x}^TQ\hat{\mathcal{R}}^*Q\hat{x}
    \end{align}
\end{subequations}
  where the positive diagonal entries of the diagonal matrix $\Lambda$ of dimension $2k+1$ are tunable. Then, check
\begin{align*}\mathcal{L}_gH(\hat{x})=0&\Leftrightarrow \hat{v}_j=0, v_bi_f^*=v_b^*i_f\\
     &\Rightarrow \mathcal{L}_fH(\hat{x})+\hat{x}^TQ\hat{\mathcal{R}}^*Q\hat{x}\\&=-\sum_{j=1}^{k-1}\Lambda_{2j}\hat{i}^2_{t_j}-\Lambda_{2k-1}\hat{v}_b^2-\Lambda_{2k+1}\hat{v}^2_l\\
     &<0.\end{align*}
Hence, the global Hamiltonian $H(\hat{x})$ is a CLF for the global system, and the desired equilibrium $x^*$ can be made exponentially stable if $u$ is picked such that \eqref{eq:PH_CLF} holds for all $\hat{x}$.
Locally, if each Hamiltonian $H_j(\hat{x}_j)$ satisfies
\begin{equation}\mathcal{L}_{g_j}H_j(\hat{x}_j)u_j+\mathcal{L}_{f_j}H_j(\hat{x}_j)\leq\hat{y}_j^Tz_j^*-\hat{x}_j^TQ_j\hat{\mathcal{R}}_jQ_j\hat{x}_j,\label{eq:lemma1_u_j}\end{equation} then the global $x^*$ can be made exponentially stable. Notice that all the states are local and there is no coupling between the controllers, since the constraint only requires the knowledge of the desired external input $z^*$. As a result, the controller can be implemented in a  \textit{decentralized manner}, without sharing the state measurements of neighboring or global state measurements. 
\subsection{Safety by Control Barrier Functions}

An unexpected cyberphysical attack targeting a control input $u_j$ can quickly destabilize the entire DC microgrid causing violation of safety constraints. The safety constraints are there to prevent system failures and permanent damage due to physical limitations of hardware components. More importantly, voltage and current fluctuations across the microgrid pose a huge risk to the health and safety of people relying on the power network. One of the most common ways to define safety restrictions for smooth operation of the DC microgrid is to specify the maximum and minimum safety thresholds. Following that, consider the safe sets defined as 
\begin{subequations}\label{eq:safe_set_v}\begin{align}&\mathcal{C}_j:=\{x_j|v_{j,\min}\leq v_j\leq v_{j,\max}\}, j=1,\dots,k-1,\\&\mathcal{C}_k:=\{x_k|i_{f,\min}\leq i_f\leq i_{f,\max}\},\\&\mathcal{C}_0:=\cap_{j=1}^k\mathcal{C}_j.\end{align}\end{subequations}
where $v_{j,\max}$ and $v_{j,\min}$ refer to the upper and lower safety limits for the output voltage $v_j$ of the DER-interfacing converter $j$. Similarly, $i_{f,\min}$ and $i_{f,\max}$ are the safety limits for current $i_f$ through the inductor of the second-order low-pass filter preceding the load resistor $r_l$. 

Recently \textit{Control Barrier Functions}\cite{taylor2023robust} were validated experimentally as a viable way to generate controllers that ensure the state trajectories remain in the safe region. CBFs generalize the concept of controlling the overshoot of the state trajectories under a stabilizing controller for an LTI system. CBFs are well equipped to handle cyberphysical attacks that can be represented as impulse disturbances causing the microgrid to steer away from its equilibrium, which are often caused by cyber attacks such as DoS and FDI.

Consider now a candidate CBF for the safe set $\mathcal{C}_j$ under the dynamics of the source
PH subsystems $j$
    \begin{equation}
     B_j(x_j):=-\frac{1}{(v_{j}-v_{j,\min})(v_{j}-v_{j,\max})},\label{eq:def_cbd}
    \end{equation} where $j=1,\cdots,k-1$. To confirm that they are CBFs, recall Def. \ref{def:CBF}, and note that 
    \begin{align*}
        \mathcal{L}_{g_j}B_j=0&\Leftrightarrow \partial B_j=0,\\
        &\Leftrightarrow v_j^0:=(v_{j,\min}+v_{j,\max})/2,\\&\Rightarrow B_j(v_j^0)>0,\text{ but }\partial B_j=0\\
        &\Rightarrow \mathcal{L}_{f_j}B_j=0<\beta_j/B_j(v_j^0).
    \end{align*}
    Consider now the safe set $\mathcal{C}_k$ for the current through the filter inductor $i_f$. Then, a candidate CBF for the load PH subsystem $k$ is given by
        \begin{equation}
     B_k(x_k):=-\frac{1}{(i_{f}-i_{f,\min})(i_{f}-i_{f,\max})}\text{ with } v_b>0.\label{eq:def_cbd_load}
    \end{equation} To confirm $B_k(i_f)$ is a CBF,
    \begin{align*}
        \mathcal{L}_{g_k}B_k=0&\Leftrightarrow i_f^0:=(i_{f,\min}+i_{j,\max})/2,\\
         &\Rightarrow B_k(i_f^0)>0,\Rightarrow0<\beta_k/B_k(i_f^0.
    \end{align*}
    By the definition of CBF, see Def. \ref{def:CBF}, safe controllers render the interior of the safe set $\text{int}(\mathcal{C})$ forward-invariant, and can be described with the following inequalities 
\begin{align}\mathcal{L}_{f_j}B_j(x_j)+\mathcal{L}_{g_j}B_j(x_j)u_j\leq \frac{\beta_j}{B_j(x_j)}, \forall x_j\in\text{int}(C_j),\label{eq:cdf_constraint}
\end{align} 
for each $j=1,\dots,k$. Importantly, \eqref{eq:cdf_constraint} are linear and separable in $u_j$. In addition, the Lie derivatives $\mathcal{L}_{f_j}B_i(x_j)$ and $\mathcal{L}_{g_j}B_j(x_j)$, and the CBFs $B_j(x_j)$ are all functions of local states $x_j$. These seamlessly fit the decentralized control architecture proposed in the current work. 
\subsection{Safe-Stabilization as Optimal Control Problem}
Finally, our main results are stated here. To start, Theorem 1 from \cite{abdirash2025nonlinearoptimalcontroldc} is restated which provides safety and stability guarantees for a centralized safety-critical controller controller from \cite{abdirash2025nonlinearoptimalcontroldc}. Following that, Theorem 2 is proposed to establish the safety and stability guarantees of the DSCC using the results of Theorem 1.
\begin{theorem}\label{jankovic}
Let $V (x)$ be a CLF and $B(x)$ be a CBF for an open-loop system in \eqref{eq:aff}. If a feedback control law $k:X\rightarrow U$ for a closed-loop system in \eqref{eq:closedloop} is a solution to the following QP (Def \ref{QP_def})
\begin{align}
        &\min_{(u, \delta)\in U\times\mathbb{R}^{n}}||u-u_{0}(x)||^2+m||\delta||^2\nonumber\\
        \text{subject to }&\gamma(\mathcal{L}_{f}V(x)+\alpha||x||^2)\nonumber\\&+\mathcal{L}_{g}V(x)\left(u+\delta\right) \leq\ 0\nonumber,\\&\mathcal{L}_fB(x)+\mathcal{L}_gB(x)u-\frac{\beta}{B(x)}\leq0,
        \label{qp}
\end{align} 
where $\gamma(p):=\begin{cases}
    \frac{m+1}{m}p, p\geq0\\
     p, p<0,
\end{cases}$ is a correction factor for the Lyapunov decay condition, free variable $m>0$ is the penalty for a slack variable $\delta$, and  $u_{0}(x)$ is a nominal Lipschitz continuous and stabilizing controller. Then,
\begin{enumerate}
    \item The QP problem is feasible for all $x\in X$ and the resulting control law $u^*(x)$ is Lipschitz continuous of order $m$ in every subset of $\text{int}(\mathcal{C})$ not containing the equilibrium.
    \item $\dot{B}(x)=\mathcal{L}_fB(x)+\mathcal{L}_gB(x)u\leq\frac{\beta}{B(x)}$ for all $x\in\text{int}(\mathcal{C})$ and the set $\text{int}(\mathcal{C})$ is forward invariant.
    \item If the barrier constraint is inactive and $\gamma m = 1$ is satisfied, the control law achieves $\dot{V}_\eta(x)=\mathcal{L}_{f}V(x)+\mathcal{L}_{g}V(x)u\leq-\alpha||x||^2.$
    \item If $x^*\in\text{int}(\mathcal{C})$ and the CLF $V$ has the CCP, then the barrier constraint is inactive around $x^*$, $u^*(x)$ is continuous at $x^*$, and the closed loop system is locally exponentially stable at $x^*$.
\end{enumerate}
\end{theorem}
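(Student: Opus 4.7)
The plan is to prove each of the four claims in turn, using the structure of the QP together with the CLF, CBF, and CCP hypotheses.

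First I would establish feasibility. Fix any $x\in X$ and note that the CLF constraint in \eqref{qp} involves the slack $\delta$ through the term $\mathcal{L}_gV(x)\delta$. If $\mathcal{L}_gV(x)\neq 0$, then $\delta$ can be scaled along $\mathcal{L}_gV(x)^T$ to make the left-hand side arbitrarily negative for any choice of $u$, so in particular $u=u_0(x)$ with a suitable $\delta$ is feasible. If $\mathcal{L}_gV(x)=0$, the slack term vanishes and the CLF constraint reduces to $\gamma(\mathcal{L}_fV(x)+\alpha\|x\|^2)\le 0$; but by Definition \ref{def:CLF}, $\mathcal{L}_gV(x)=0$ forces $\mathcal{L}_fV(x)+\alpha\|x\|^2<0$, and on this branch $\gamma$ is the identity, so the constraint holds automatically. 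Independently, the CBF constraint is linear in $u$ and, by Definition \ref{def:CBF}, is satisfiable on $\text{int}(\mathcal{C})$ because whenever $\mathcal{L}_gB(x)=0$ the residual $\mathcal{L}_fB(x)-\beta/B(x)$ is already negative. Combining, the feasible set is non-empty. Strict convexity of the quadratic objective then guarantees uniqueness of $(u^*,\delta^*)$, and Lipschitz continuity of order $m$ on compact subsets of $\text{int}(\mathcal{C})\setminus\{x^*\}$ follows from standard parametric-QP sensitivity theory applied to the KKT system, using that the data $\mathcal{L}_fV,\mathcal{L}_gV,\mathcal{L}_fB,\mathcal{L}_gB,B$ are smooth and that $B$ stays bounded away from zero on such subsets.

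Claim 2 is essentially a bookkeeping step: since $u^*(x)$ is feasible, the CBF inequality $\mathcal{L}_fB(x)+\mathcal{L}_gB(x)u^*(x)\le \beta/B(x)$ holds pointwise on $\text{int}(\mathcal{C})$, so $\dot{B}(x)\le \beta/B(x)$ along the closed-loop trajectories. Forward invariance of $\text{int}(\mathcal{C})$ then follows from the comparison lemma applied to the scalar ODE $\dot{B}\le\beta/B$, which keeps $B$ finite and hence $h=1/B$ strictly positive whenever $h(x_0)>0$. For Claim 3, I would apply the KKT conditions to the QP when the barrier constraint is inactive. Stationarity yields $2(u^*-u_0)=-\lambda\mathcal{L}_gV(x)^T$ and $2m\delta^*=-\lambda\mathcal{L}_gV(x)^T$, so $m\delta^*=u^*-u_0$. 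Substituting into the active CLF constraint and using $\gamma m=1$ in the appropriate branch (so that the coefficient multiplying the nominal CLF residual collapses to the desired value) eliminates $\delta^*$ and produces exactly $\mathcal{L}_fV(x)+\mathcal{L}_gV(x)u^*(x)\le -\alpha\|x\|^2$.

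For Claim 4, the strategy is to argue that the CBF stops being binding near $x^*$. Because $x^*\in\text{int}(\mathcal{C})$, continuity of $B$ and its Lie derivatives ensures that $\beta/B(x)$ is uniformly bounded below on a neighborhood of $x^*$, while CCP guarantees a control arbitrarily close to $u^*$ driving $\dot V$ strictly negative; thus on this neighborhood the barrier constraint admits strict slack at $u^*(x)$, so it is inactive. The QP then collapses to the pure CLF-QP, whose solution is continuous at the equilibrium (Claim 1) and satisfies the exponential-decay inequality (Claim 3), yielding local exponential stability through the standard CLF converse argument. The main obstacle I anticipate is Claim 1's Lipschitz estimate: one must show that as $x$ moves, the active set of the QP changes in a controlled way, and that the dependence on $m$ of the Lipschitz constant is explicit; this requires carefully treating the case switch in $\gamma$, since $\gamma$ is only piecewise-affine, and ensuring the KKT multiplier $\lambda$ stays bounded on compacts away from $x^*$.
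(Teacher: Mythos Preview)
The paper does not actually prove Theorem~\ref{jankovic}. Immediately before stating it, the authors write that ``Theorem 1 from \cite{abdirash2025nonlinearoptimalcontroldc} is restated,'' and no proof appears afterward; the theorem is invoked only as a black box inside the short proof of Theorem~\ref{ours}. So there is no proof in this paper for your proposal to be compared against.

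That said, your sketch is a faithful reconstruction of the standard argument behind this class of CLF--CBF--QP results (the label \texttt{jankovic} is a tell: the result and its proof technique originate in Jankovi\'c's work on combining barrier and Lyapunov constraints via slack variables). Your decomposition---feasibility by decoupling the CBF constraint (in $u$ only) from the slacked CLF constraint (in $u+\delta$), forward invariance by the comparison lemma on $\dot B\le\beta/B$, the KKT identity $m\delta^\ast=u^\ast-u_0$ when the barrier is inactive, and CCP to kill the barrier near $x^\ast$---is exactly the route taken in the cited source. The one place where you correctly flag a real difficulty is the ``Lipschitz of order $m$'' claim: the piecewise-affine $\gamma$ makes the QP data only piecewise smooth, so the parametric-QP sensitivity argument must be done branch by branch and then glued across the switching surface $\mathcal{L}_fV(x)+\alpha\|x\|^2=0$; getting the explicit $m$-dependence of the Lipschitz constant requires tracking how the KKT multiplier scales with $m$ in each branch. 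Since the paper itself defers all of this to the citation, your sketch is already more detailed than anything that appears here.
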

\begin{theorem}\label{ours}
    Let $H_j$ be the Hamiltonian for the $j$-th interconnected PH subsystem of the DC microgrid. Similarly, let $B_j$ be the CBF for safe set $\mathcal{C}_j$. Let $u_{jPH}$ be the nominal stabilizing controller given by \eqref{eq:u_j_hat} and \eqref{eq:d_l_hat}. Then, a local feedback controller $u_j$ can be formulated as a solution to the following QP
\begin{align}u_j^*(x_j)=&\argmin_{(u_j,\delta_j)\in\mathbb{R}^{2}}
            ||u_j-u_{jPH}(x_j)||^2+m_j||\delta_j||^2\nonumber\\
        &\text{subject to }\nonumber\\&\hspace{0cm}\gamma_j\left(\mathcal{L}_{f_j}H_j(\hat{x}_j)+\alpha_j||\hat{x}_j||^2-\hat{y}_j^Tz_j^* \right)\nonumber\\&+\mathcal{L}_{g_j}H_j(\hat{x}_j)\left(u_j+\delta_j\right)\leq\ 0\nonumber,\\&\hspace{0cm}\displaystyle \mathcal{L}_{f_j}B_j(x_j)+\mathcal{L}_{g_j}B_j(x_j)u_j\leq \frac{\beta_j}{B_j(x_j)},\label{eq:qp2}\end{align}
where $j=1,\dots,k,$, $\gamma_j(p):=\begin{cases}
    \frac{m_j+1}{m_j}p, p\geq0\\
    p, p<0,
\end{cases}$, $m_j>0$ and $\alpha_j$ is the minimum eigenvalue of $\hat{\mathcal{R}}^*_j$. Then the local controller $u_j^8(x_j)$ given by \eqref{eq:qp2} can be implemented in real-time to control the DC microgrid in a decentralized fashion. If so, it stabilizes the desired equilibrium of each subsystem, which implies the stability of the global equilibrium. Importantly, it ensures that (i) the terminal voltages of the converters, interfacing the DER, and (ii) the filter current of the converter, interfacing the load resistor, stay within their specified maximum and minimum safety limits.
\end{theorem}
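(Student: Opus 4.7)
The plan is to reduce Theorem \ref{ours} to a subsystem-wise application of Theorem \ref{jankovic}, and then glue the $k$ local certificates together using the GIPH structure from Def. \ref{def:global_pH} and the decomposition $H(\hat x)=\sum_{j=1}^{k}H_j(\hat x_j)$. First, for each fixed $j$, I would interpret the $j$-th QP in \eqref{eq:qp2} as an instance of the QP in Theorem \ref{jankovic} applied to the isolated control-affine subsystem \eqref{eq:PH_source} or \eqref{eq:PH_load}, with $V=H_j(\hat x_j)$, barrier $B=B_j(x_j)$, nominal input $u_0=u_{jPH}$, and with the ``disturbance'' term $\hat y_j^{T}z_j^{*}$ absorbed into the Lyapunov inequality exactly as in \eqref{eq:lemma1_u_j}. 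This is the pivotal reformulation: once it is made, items (1)-(4) of Theorem \ref{jankovic} transfer verbatim to the $j$-th QP.

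Second, I would verify the three hypotheses that Theorem \ref{jankovic} requires. \emph{(a) CLF property of $H_j$:} positive definiteness in $\hat x_j$ follows from $Q_j\succ 0$ in \eqref{eq:H_CLF}, and the decay condition with rate $\alpha_j$ (the minimum eigenvalue of $\hat{\mathcal R}^{*}_j$) is the local counterpart of the global calculation \eqref{eq:globHamiltonian} together with \eqref{eq:lemma1_u_j}; the CCP at $x_j^{*}$ follows because the dynamic IDA-PBC controller $u_{jPH}$ from \eqref{eq:u_j_hat}-\eqref{eq:d_l_hat} is by construction Lipschitz continuous and achieves strict decay near $x_j^{*}$. \emph{(b) CBF property of $B_j$:} this was verified directly after \eqref{eq:def_cbd} and \eqref{eq:def_cbd_load}. \emph{(c) Lipschitz, stabilizing nominal input:} again by the construction of $u_{jPH}$ in Section \ref{sec:stabilzing}, which was the very reason for adopting the dynamic (rather than static) IDA-PBC in the first place. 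With these three ingredients in hand, Theorem \ref{jankovic} yields that each local QP is feasible, that $u_j^{*}(x_j)$ is Lipschitz continuous away from $x_j^{*}$ and continuous at $x_j^{*}$, that $\text{int}(\mathcal C_j)$ is forward invariant, and that each local equilibrium $x_j^{*}$ is exponentially stable in the closed loop.

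Third, I would assemble the global statement. Safety is immediate: since each $\mathcal C_j$ is forward invariant under its local controller and the constraints in \eqref{eq:qp2} involve only local states (so no other subsystem can violate $\mathcal C_j$), the intersection $\mathcal C_0=\bigcap_{j=1}^{k}\mathcal C_j$ in \eqref{eq:safe_set_v} is forward invariant for the interconnected closed-loop system. For stability, I would sum the local Lyapunov decay inequalities obtained from item (3) of Theorem \ref{jankovic}, use $\sum_j \hat y_j^{T}\hat z_j=0$ (the interconnection identity verified at the end of Section \ref{subsec:GIP dynamics}), and the fact that $z_j^{*}$ in \eqref{eq:lemma1_u_j} exactly accounts for the cross-coupling at the desired operating point; this produces the global decay $\dot H(\hat x)\le -\hat x^{T}Q\hat{\mathcal R}^{*}Q\hat x$ of \eqref{eq:globHamiltonian}, which is the required CLF inequality for the GIPH and hence gives exponential stability of the global $x^{*}$. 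The decentralization claim then follows because every Lie derivative, every barrier, and every constant in \eqref{eq:qp2} depends only on $x_j$ and on the constants $z_j^{*}$, $u_j^{*}$, $y_j^{*}$ fixed offline in Section \ref{steady-state}.

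The main obstacle, and the step I would spend the most care on, is the bridge between the \emph{local} exponential decay of each $H_j$ and the \emph{global} decay of $H$ when the barrier constraints are active. Theorem \ref{jankovic}(3) guarantees the Lyapunov decay only when the barrier constraint is inactive, so one must argue that activation of $B_j$ injects a controlled slack which, after summing, is still dominated by the aggregate dissipation $-\hat x^{T}Q\hat{\mathcal R}^{*}Q\hat x$; this is precisely where the slack variable $\delta_j$ and the correction factor $\gamma_j$ in \eqref{eq:qp2} play their role, and why the tunable $\Lambda$ in \eqref{eq:PH_CLF} was introduced with positive margin. Handling this carefully, together with a brief invariance argument near each $x_j^{*}$ using the CCP to conclude that the barrier constraint is eventually inactive inside $\text{int}(\mathcal C_0)$, completes the proof.
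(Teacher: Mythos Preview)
Your proposal is correct and follows essentially the same route as the paper: recognize that $H$ is a global CLF and each $B_j$ is a separable CBF depending only on $(x_j,u_j)$, then apply Theorem~\ref{jankovic} subsystem by subsystem and glue the certificates via the GIPH identity $\sum_j \hat y_j^{T}\hat z_j=0$. The paper's proof is in fact much terser than yours---it simply notes separability, sets $\alpha=\min_j\alpha_j$, and invokes Theorem~\ref{jankovic} on each pair $(H,B_j)$---so your more careful treatment of the CCP, the local-to-global stability bridge, and the barrier-active case goes beyond what the paper actually spells out.
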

\begin{proof}
    Earlier in this section, we showed that $H$ is a global CLF in \eqref{eq:globHamiltonian}
    and $\{B_j\}_{j=1}^k$ are CBFs in \eqref{eq:def_cbd} and \eqref{eq:def_cbd_load}.  Set $\alpha=\min_j\alpha_j$, and notice that each CBF is separable in both $u_j$ and $x_j$, resulting in the CBF conditions being independent of each other. Then, applying Theorem \ref{jankovic} to each pair of $H$ and $B_j$ and solving for $u^* _j$, the results follow.
\end{proof}

\section{Validation and Discussion}\label{numerical}
Consider a single-bus DC microgrid with two DEGs supplying an aggregated nonlinear load. The circuit schematic of the DC microgrid is shown in Fig. \ref{fig:plecs_schem}. The model parameters are given in Table \ref{tab:my_label1}. Our objectives are as follows
\begin{enumerate}
    \item keeping the terminal voltage $v_j$ of each DC/DC converter within its safety bounds; 
    \item keeping switch the current, equivalently $i_f, i_{s1}, i_{s2}$, within the safety limits; 
    \item regulating the bus voltage $v_b$ to its optimal steady-state value $v_b^*$. 
\end{enumerate}
The optimal steady-state values of the microgrid are listed in Table \ref{tab:my_label}. The simulation results are presented in Fig. \ref{fig:plecs_validation}, which compares the performance of the proposed DSCC with the Dynamic IDA-PBC of \cite{yuan2025large}. Note that a Zener diode was used to limit the terminal voltage of a converter $v_j$, and a current limiter was used to limit the filter inductor current $i_f$. Design parameters of the DSCC tuned to the given DC microgrid are listed below.
\begin{enumerate}
    \item $m_{1,2,3}=10$ to prioritize safety in the trade-off between safety and stability, and limit the slew rate of the current sources;
    \item CBF decay rates of $\beta_{1,2,3}=0.1$ so that the states approach the safe boundary slow enough to prevent safety violation due to sampling and quantization errors;
    \item CLF decay rates of $\alpha_3=0.6>\alpha_{1,2}=0.01$ to make sure the load subsystem reaches its operating point faster than the source subsystems.
\end{enumerate} The steady-state values achieved by the DSCC and the Dynamic IDA-PBC are reported in the last two columns of Table \ref{tab:my_label}, respectively. The results demonstrate that the proposed DSCC effectively stabilizes the system around $x^*$ and adheres to the safety constraints, despite being initialized far from the equilibrium. In contrast, the Dynamic IDA-PBC struggles to maintain stability and violates the safety constraints.
\begin{figure*}[ht]
    \centering
\includegraphics[width=0.76\textwidth]{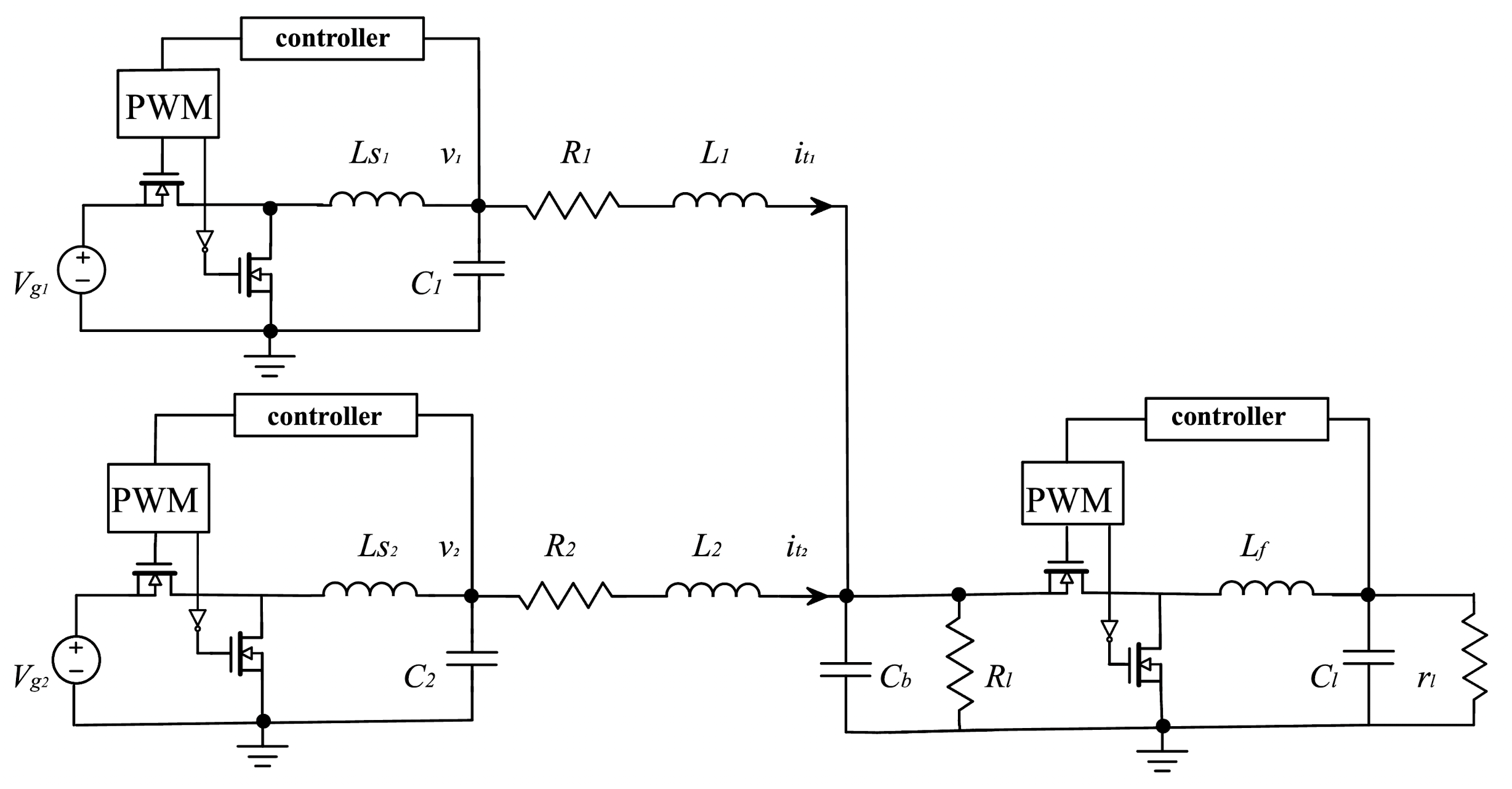}
    \caption{Switched-circuit model of the DC microgrid which was used in high-fidelity simulation validation of the proposed DSCC.}
    \label{fig:plecs_schem}
\end{figure*}
\setlength{\tabcolsep}{2pt}
\begin{table}[ht]
    \centering
    \caption{Model parameters of DC microgrid circuitry in Fig. \ref{fig:plecs_schem} for the numerical experiment. Note that each filter inductor has a parasitic series resistance. The load consumes $P=1463$ W and the switching frequency is $f_s=200$ kHz.}\small
    \begin{tabular}{c|c|c|c|c|c|c}
       $j$&$V_{g_j}$ [V] & $ L_{s_j}$  [mH] & $r_{s_j}$  [m$\Omega$] & $   C_j $  [mF] & $   L_j$  [mH] & $   R_j$  [m$\Omega$] 
       \\
     \hline
      
         1& 48& 0.159& 3.552&  0.09 &   0.49 &   18.78\\ 2& 48& 0.159& 3.552 &   0.07 &   0.48 &   17.78\\ \hline
           Load & &$L_f$ [mH]  & $r_f$ ]m$\Omega$]& $C_b$ [mF] &    $r_l$ [m$\Omega$]  &    $R_l$ [m$\Omega$]\\\hline
     & &  0.16  & 2& 0.47 &  175   &   2000  \\
    \end{tabular}
    \label{tab:my_label1}
\end{table} 
\begin{figure*}[ht]
\centering\includegraphics[width=0.65\linewidth]{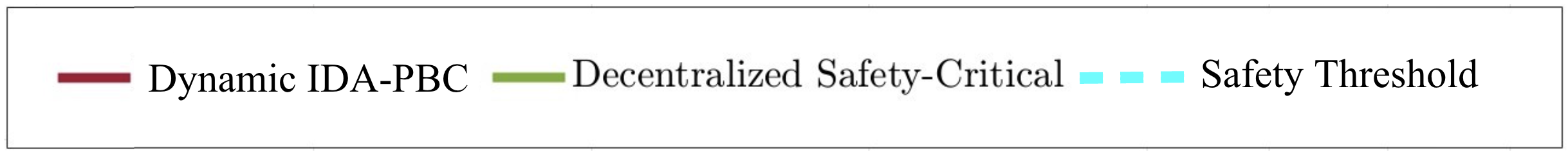}
\begin{multicols}{3}
\includegraphics[trim=0.5cm 0 0 0.5cm,clip=true,width=\linewidth]{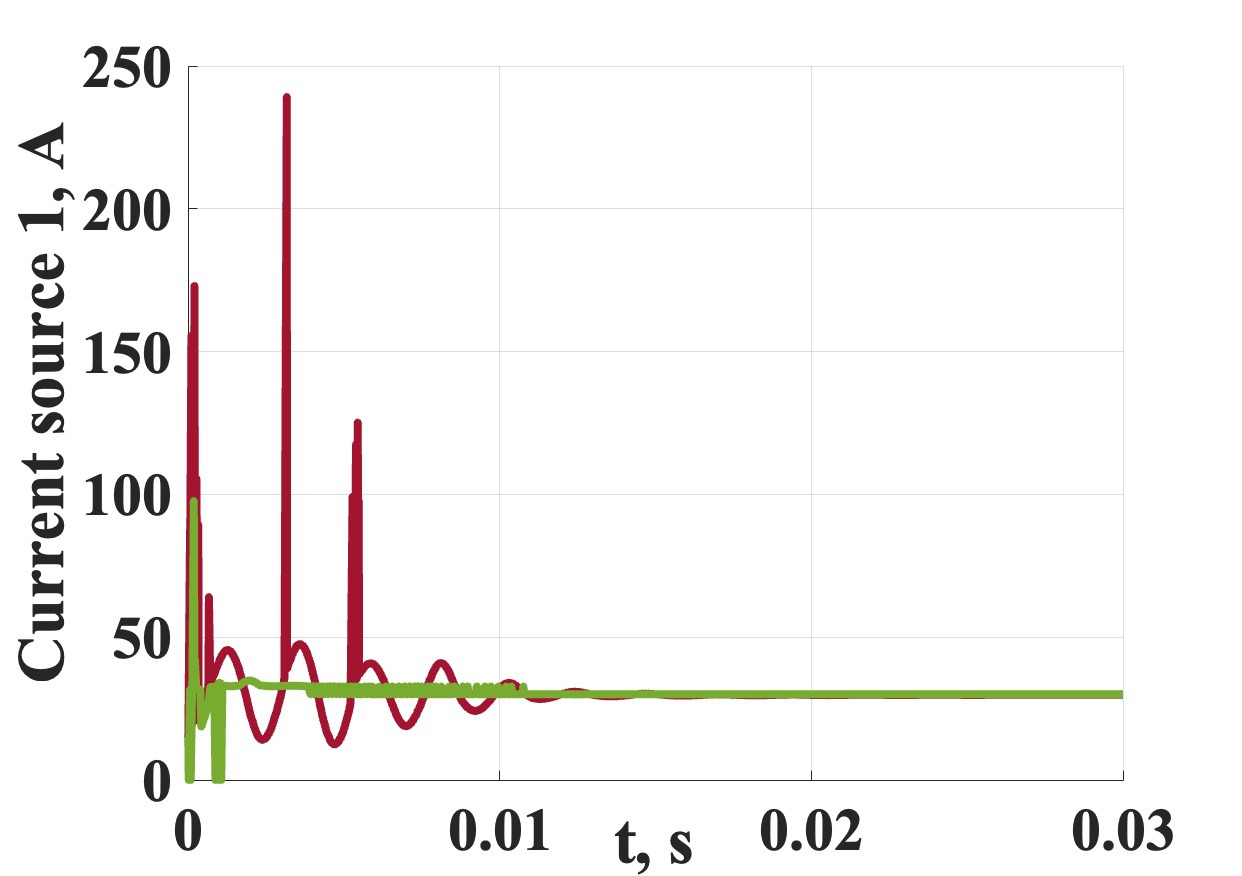}\par
\includegraphics[width=\linewidth]{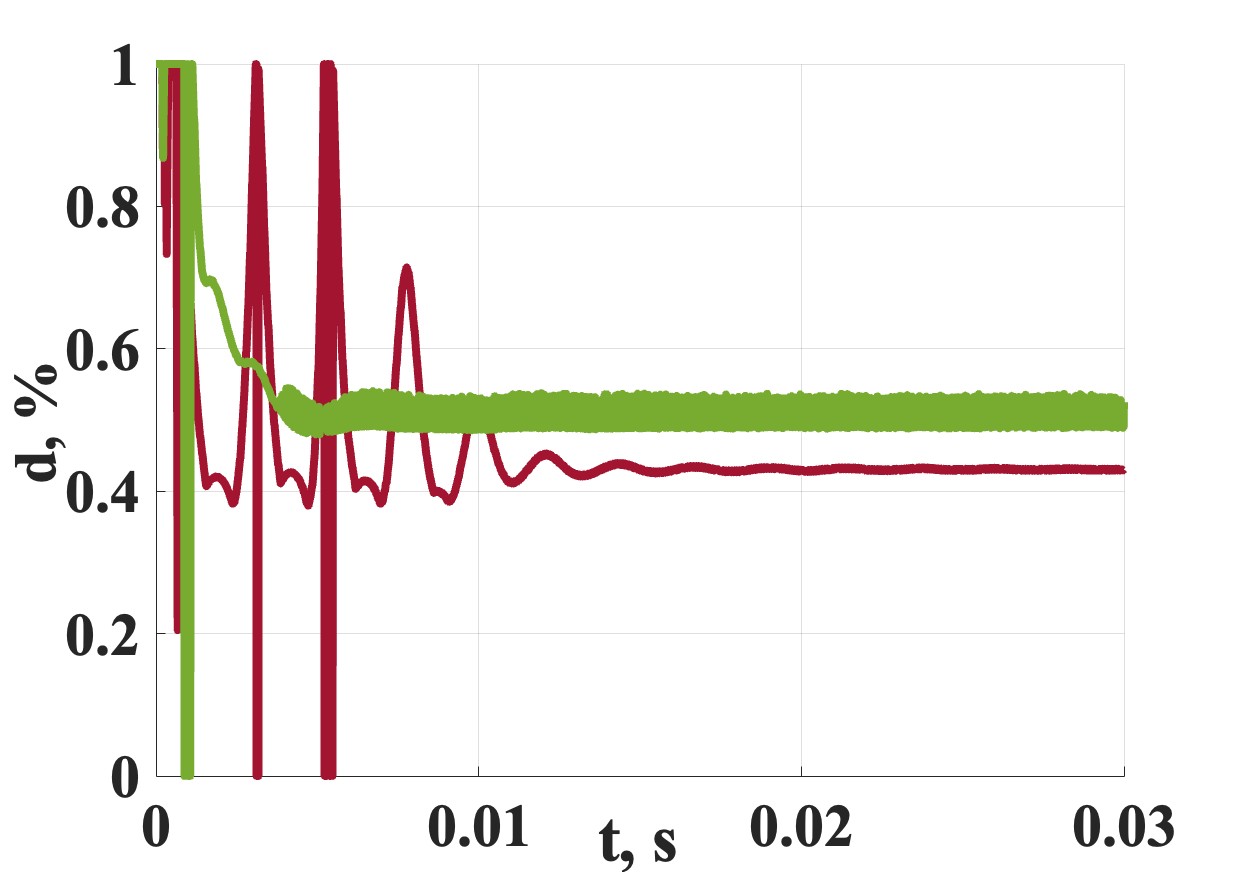}\par
\includegraphics[width=\linewidth]{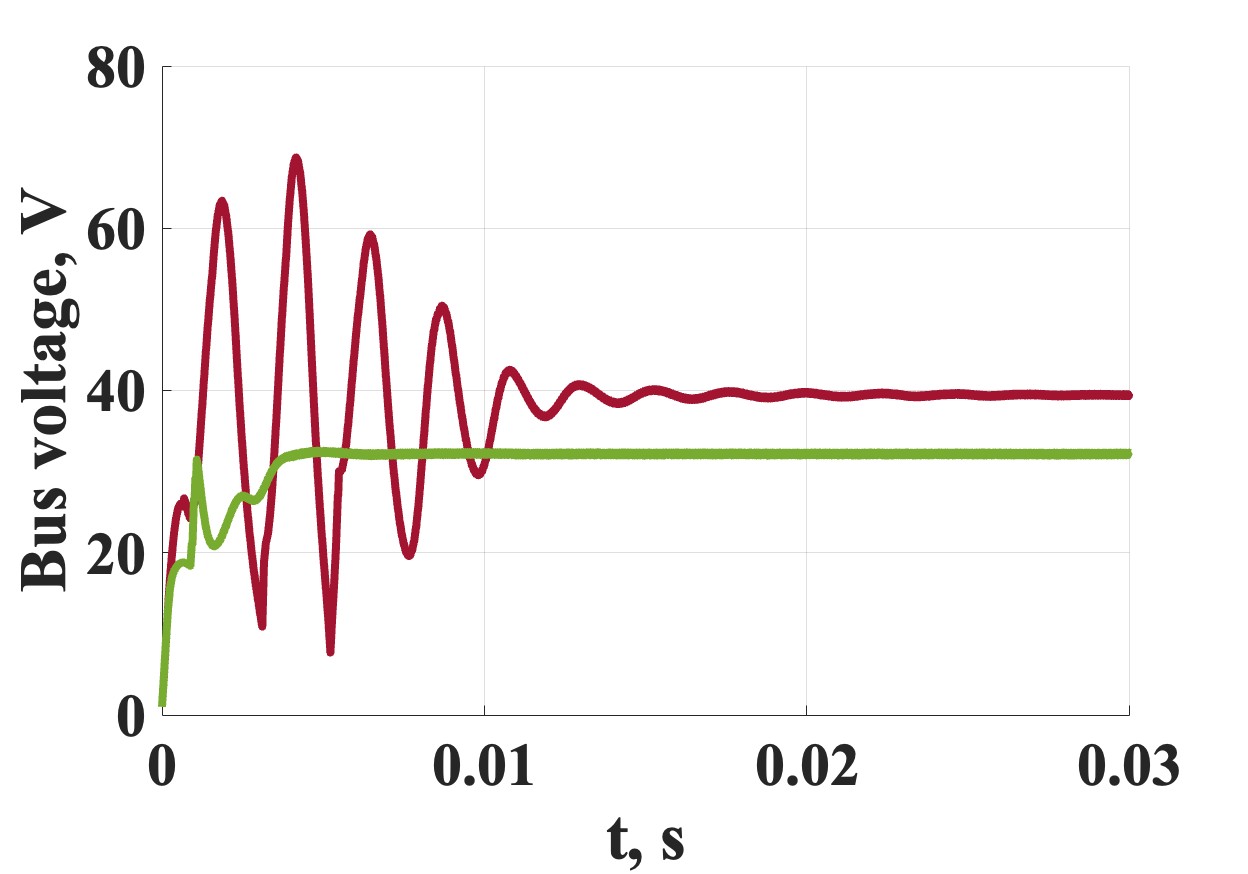}\par
\end{multicols}
\begin{multicols}{3}
 \includegraphics[width=\linewidth]{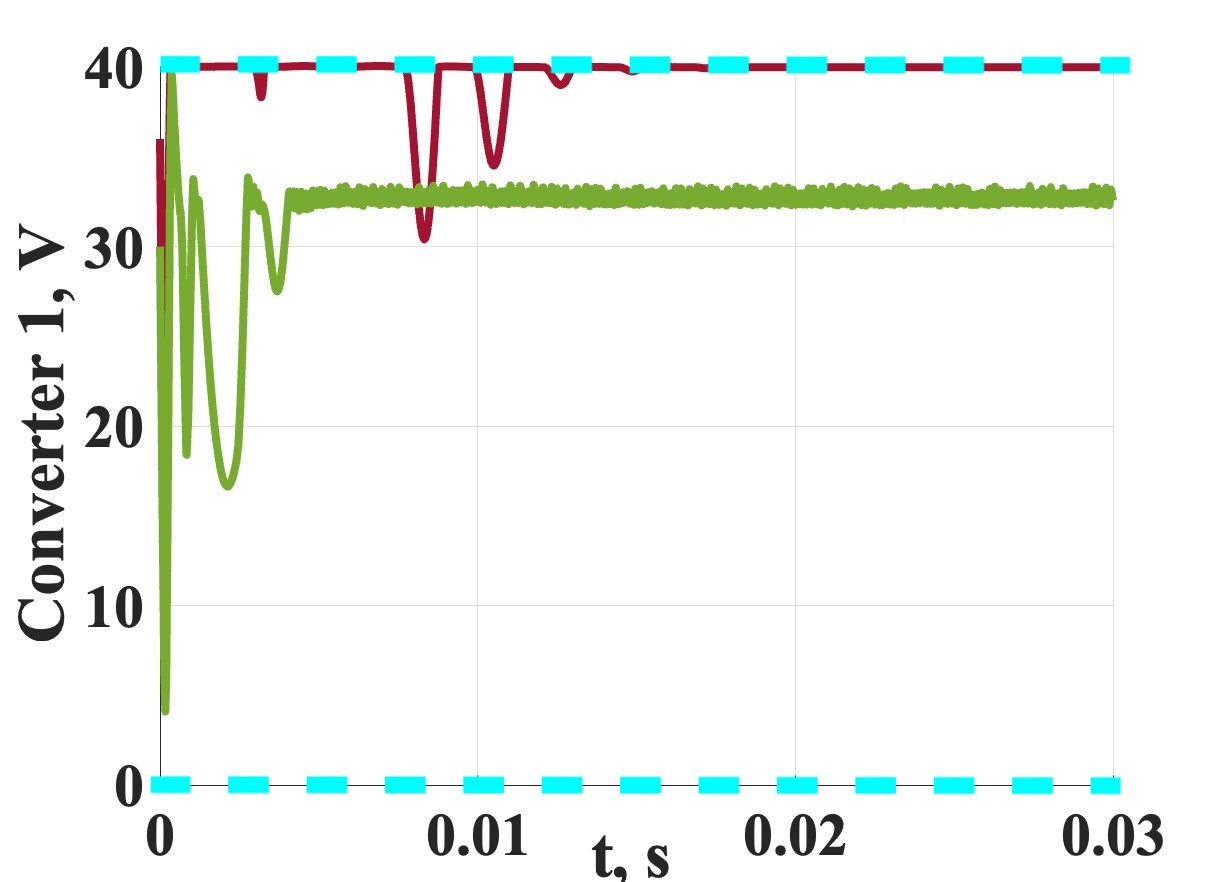}\par 
 \includegraphics[width=\linewidth]{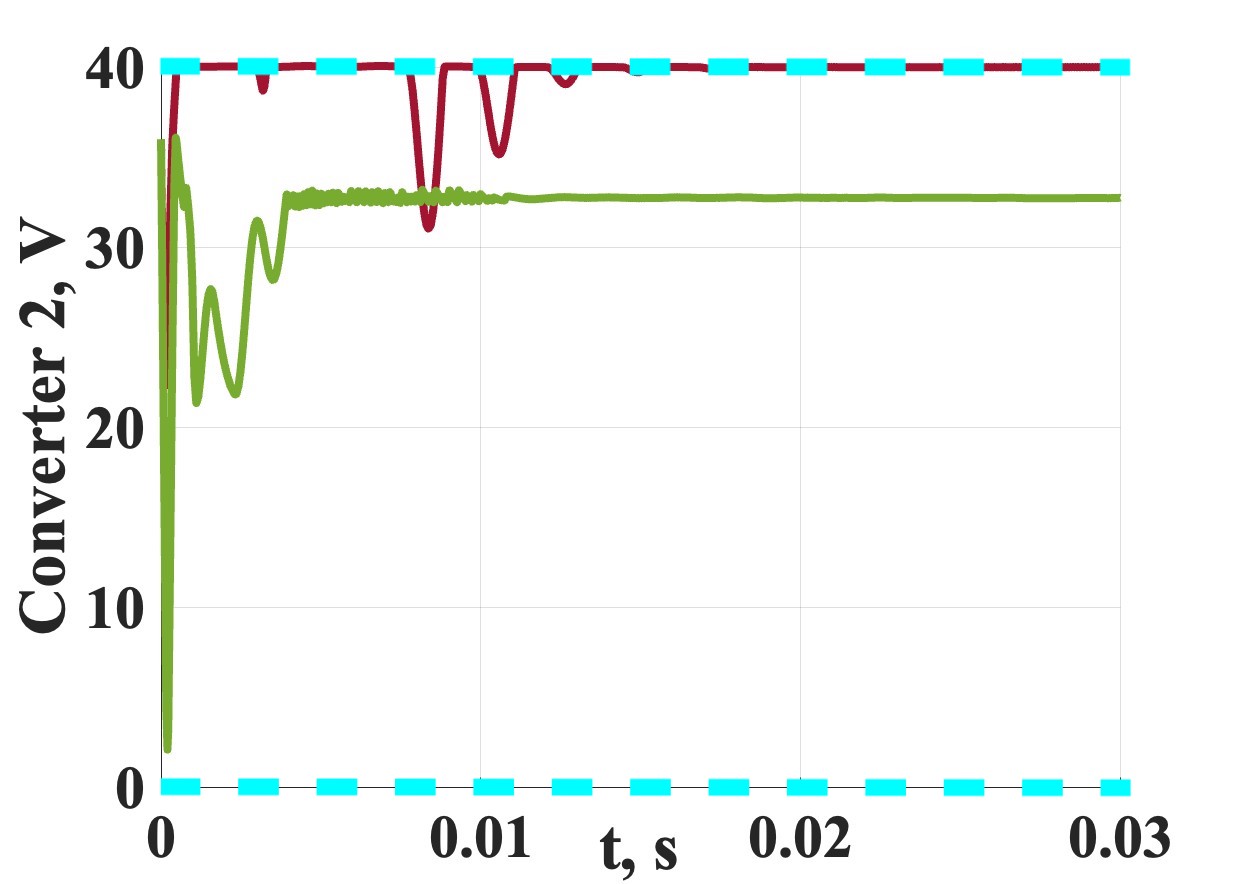}\par    \includegraphics[width=\linewidth]{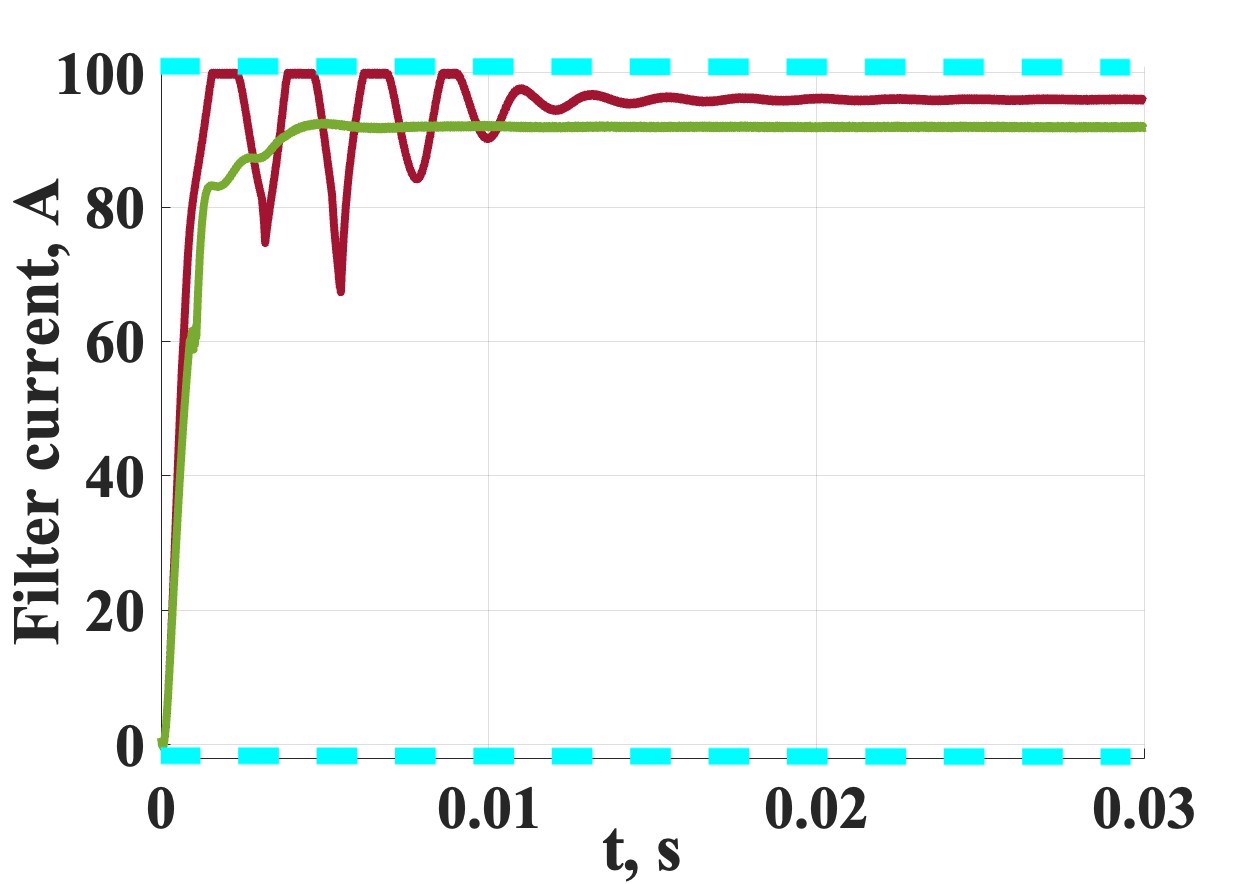}\par 
\end{multicols}
\caption{PLECS validation of the proposed Decentralized Safety-Critical Controller (DSCC) for a DC microgrid in Fig. \ref{fig:plecs_schem} under a cyber attack that shifts the states away from the operating point. To replicate a real scenario, the initial conditions are chosen to be far from the optimal operating point, see Table \ref{tab:my_label1}. The plots compare he performance of the DSCC with the Dynamic IDA-PBC. The objective is to safely regulate the bus of the DC microgrid to its desired steady-state value. Note that only select states and control inputs are shown for conciseness.}\label{fig:plecs_validation}
\end{figure*}
\setlength{\tabcolsep}{3pt}
\begin{table}[ht]
    \centering
    \caption{Results of the numerical experiment tabulated. The desired bus voltage is specified to be 32\,V, and the desired duty cycle of the load-interfacing converter is 50\%. Initial states are picked to be far enough away from the corresponding equilibrium. The steady-state values are averaged over 10$\mu$s in case there are remaining transient harmonics.}\small\begin{tabular}{c|c|c|c|c|c}
    
      State  & Equilibrium   &  Initial  state  &   Steady-state &   Steady-state & Units 
   \\ 
       &   &   & by DSCC & by Dynamic   
       \\
    & & & & IDA-PBC\\
     \hline
    $v_1$ &    32.56 & 23.00& 32.74& 40.00&[V] \\ 
     
    $i_{t_1}$ &  30.01 & 15.00&30.01&29.70&[A] \\\hline
   $v_2 $&    32.56 & 30.00&33.03& 40.00&[V]\\

    $i_{t_2}$ &   31.70 & 12.00 & 31.71& 31.3 &[A]\\\hline
    $v_{b}$ &   32.00 & 1.00&32.32&39.55&[V]   \\
    $i_f$ &   91.43 & 1.00&91.77&49.21&[A]   
     \\
    $v_{l}$ &   16.00 & 9.00&16.09&16.80&[V]   
     \\
    \end{tabular}
    \label{tab:my_label}
\end{table}

\section{Conclusion}
Power electronics-enabled DC microgrids offer an opportunity to move toward more modular and sustainable energy networks in the future. The power efficiency of DC microgrids is well established; however, they lack rigorous cybersecurity and safety certificates. The nonlinear dynamics introduced by DC/DC converters make it difficult to provide the said certificates. In particular, cyberattacks on the microgrid, such as DoS and FDI, can jeopardize both safety and stability. To prevent that, this work proposes a decentralized and online controller named DSCC that can be implemented in practice and comes with provable and reliable guarantees. Each DC/DC converter is controlled locally and does not require any communication with neighboring converters. This modularity ensures that the microgrid is resilient against converter failures caused by DoS attacks. Equally important, DSCC is designed to handle FDI attacks real-time. Performance advantages are verified by high-fidelity switched-circuit simulations. The future direction is to extend the control scheme to different topologies of the DC microgrid.

\ifCLASSOPTIONcaptionsoff
  \newpage
\fi

\bibliographystyle{IEEEtran}
\bibliography{TSG_Murat}

\end{document}